\title{Electronic Density of States for Incommensurate Layers}
\author{Daniel Massatt, Mitchell Luskin, Christoph Ortner}
\date{\today}                                           
\thanks{DM was supported by NSF PIRE Grant OISE-0967140. ML was supported in
  part by ARO MURI Award W911NF-14-1-0247 and by the Radcliffe Institute for
  Advanced Study at Harvard University. CO was supported by ERC Starting Grant
  335120.}
\numberwithin{equation}{section}
\newtheorem{remark}{Remark}
\newtheorem{thm}{Theorem}
\newtheorem{lemma}{Lemma}
\newtheorem{assumption}{Assumption}
\numberwithin{definition}{section}
\numberwithin{thm}{section}
\numberwithin{remark}{section}
\numberwithin{prop}{section}
\numberwithin{corollary}{section}
\numberwithin{lemma}{section}
\numberwithin{assumption}{section}
\newcommand{\Tr}{\text{Tr}}
\newcommand{\C}{\mathcal{C}}
\newcommand{\R}{\mathcal{R}}
\newcommand{\Z}{\mathbb{Z}^2}
\newcommand{\A}{\mathcal{A}}
\newcommand{\D}{\mathcal{D}}
\newcommand{\MatSpace}{ M_{|\Omega_r|}(\mathbb{C})}
\newcommand{\M}{H}
\newcommand{\Ainv}{\mathfrak{A}}
\newcommand{\Rinv}{\mathfrak{R}}
\newcommand{\Mat}{\tilde{H}}
\newcommand{\Msup}{S[H]}
\newcommand{\DiscSize}{N_{\text{disc}}}
\def\mod{{\rm mod}}
\newcommand{\DiscSpace}{S}
\newcommand{\Imaginary}{\text{Im}}
\newcommand{\Resolvent}{\text{Res}}
\newcommand{\DoS}{\text{DoS}}
\begin{document}

\begin{abstract}
   We prove that the electronic density of states (DOS) for 2D incommensurate layered
   structures, where Bloch theory does not apply, is well-defined as the
   thermodynamic limit of finite clusters. In addition, we obtain an
   explicit representation formula for the DOS as an integral over
   local configurations.

   Next, based on this representation formula, we propose a novel algorithm for
   computing electronic structure properties in incommensurate heterostructures,
   which overcomes limitations of the common approach to artificially strain a
   large supercell and then apply Bloch theory.
\end{abstract}

\maketitle

\section{Introduction}

Bloch theory provides an elegant solution for describing the electronic
structure of periodic materials. However, there has been a lot of focus recently
on the study of {\em incommensurate} layers of two-dimensional crystal
structures \cite{Terrones2014,2DPerturb15}. In the absence of periodicity,
computing the electronic structure of such materials becomes more challenging.

A common approach to approximate the electronic properties of such a system is
to artificially strain it to obtain periodicity on a large supercell, and then
apply Bloch theory to this periodic
system~\cite{Terrones2014,Loh2015,Ebnonnasir2014,Koda2016,Komsa2013}.
Commensurate approximations to an incommensurate system are computationally
expensive, and their approximation error is unclear. Here we introduce a new
method for computing a class of observables derived from the density of states
for multi-layer incommensurate heterostructures {\em without} requiring an
artificial strain in the system.

To approximate an observable of an infinite incommensurate system, we
approximate local lattice site contributions to the observable. We observe that
a site is uniquely defined by its local geometry. Using an equidistribution
theorem, there is a predictable distribution of local geometries, and hence site
contributions. Consequently, we can express observables in incommensurate
heterostructures in terms of an integral over a unit cell, in a fashion rather
similar to Bloch theory. This unit cell classification of local configurations
is related to Bellisard's noncommutative Brillouin Zone for aperiodic
solids~\cite{Bellissard2002}. Prodan used the Bellisard formalism to compute
electronic properties for periodic materials with on-site defects modeled by a
tight-binding model~\cite{prodan2012}. Here we consider the density of states
and related observables for incommensurate multi-layers.

While the methodology is in principle generic, our derivation and analysis
focuses on tight-binding models, which are commonly employed for computing the
electronic structure of 2D materials~\cite{Castro2009,Kaxiras_2003}. We
consider the density of states and related observables for incommensurate
multi-layers. We use Chebyshev Polynomial methods to approximate the density
of states as a function~\cite{Huang:2006go, Mazzi:2011hl, Roder:1997ee,
Silver:1996uj, kernel_poly}, and from this function any observable can be
computed.

\subsection*{Outline}
In Section \ref{sec:main} we introduce the results for the
bilayer case, and briefly discuss their extension to the multi-layer case. In
Section \ref{subsec:inc} we introduce incommensurate systems and
the equidistribution result. In Section \ref{subsec:tb} we specify
the details of our model problem, and in Section \ref{subsec:local} we show how
to compute the local density of states. In Section \ref{subsec:therm} we prove
the infinite system is well posed and express the observables as an integral
over local observables.

Section \ref{sec:numerics} we describe an approximation scheme and present
numerical results. In Section \ref{subsec:disc} we discuss the integral
discretization. In Section \ref{subsec:approx}, we introduce a Chebyshev Kernel Polynomial Method, and in Section \ref{subsec:num_results} we present numerical
results. In Section \ref{sec:thm} we present the details of the proofs.


\section{Main Results}
\label{sec:main}

\subsection{Incommensurate Heterostructures}
\label{subsec:inc}
Consider two periodic atomic sheets in parallel 2D planes separated by a constant distance. Each individual sheet can be described as a Bravais lattice embedded in $\mathbb{R}^2$ by neglecting the out of plane distance. This coordinate is not relevant for classifying the aperiodicity and will be incorporated in section \S~ \ref{subsec:tb}. For sheet $j \in \{1,2\}$, we define the Bravais lattice
\begin{equation*}
\R_j = \{ A_jn \text{ : } n \in \mathbb{Z}^2\},
\end{equation*}
where $A_j$ is a $2\times2$ invertible matrix. We define the {\em unit cell} for sheet $j$ as
\begin{equation*}
\Gamma_j = \{ A_j \alpha \text{ : } \alpha \in [0,1)^2\}.
\end{equation*}
Each individual sheet is trivially periodic, since
\begin{equation*}
\R_j = A_j n+ \R_j \qquad \text{for $n \in \mathbb{Z}^2$. }
\end{equation*}
However, the combined system $\R_1\cup\R_2$ need not be periodic (Figure
\ref{fig:incommMoire}). (Note that here $\R_1 \cup \R_2$ is only considered to
describe geometry, not as an indexing of the atoms as
it would have the failure of identifying the origins from each lattice.)

Since we are interested in aperiodic systems, we make the following standing assumption:

\begin{assumption}
\label{assump:main}
The lattices $\R_1$ and $\R_2$ are {\em incommensurate}, that is, for $v \in \mathbb{R}^2$,
\begin{equation*}
v + \R_1 \cup \R_2 = \R_1 \cup \R_2 \hspace{2mm}\Leftrightarrow \hspace{2mm}v = \begin{pmatrix} 0\\0\end{pmatrix}.
\end{equation*}
\end{assumption}
\begin{figure}[ht]
\centering
\begin{subfigure}{.45\textwidth}
\centering
\includegraphics{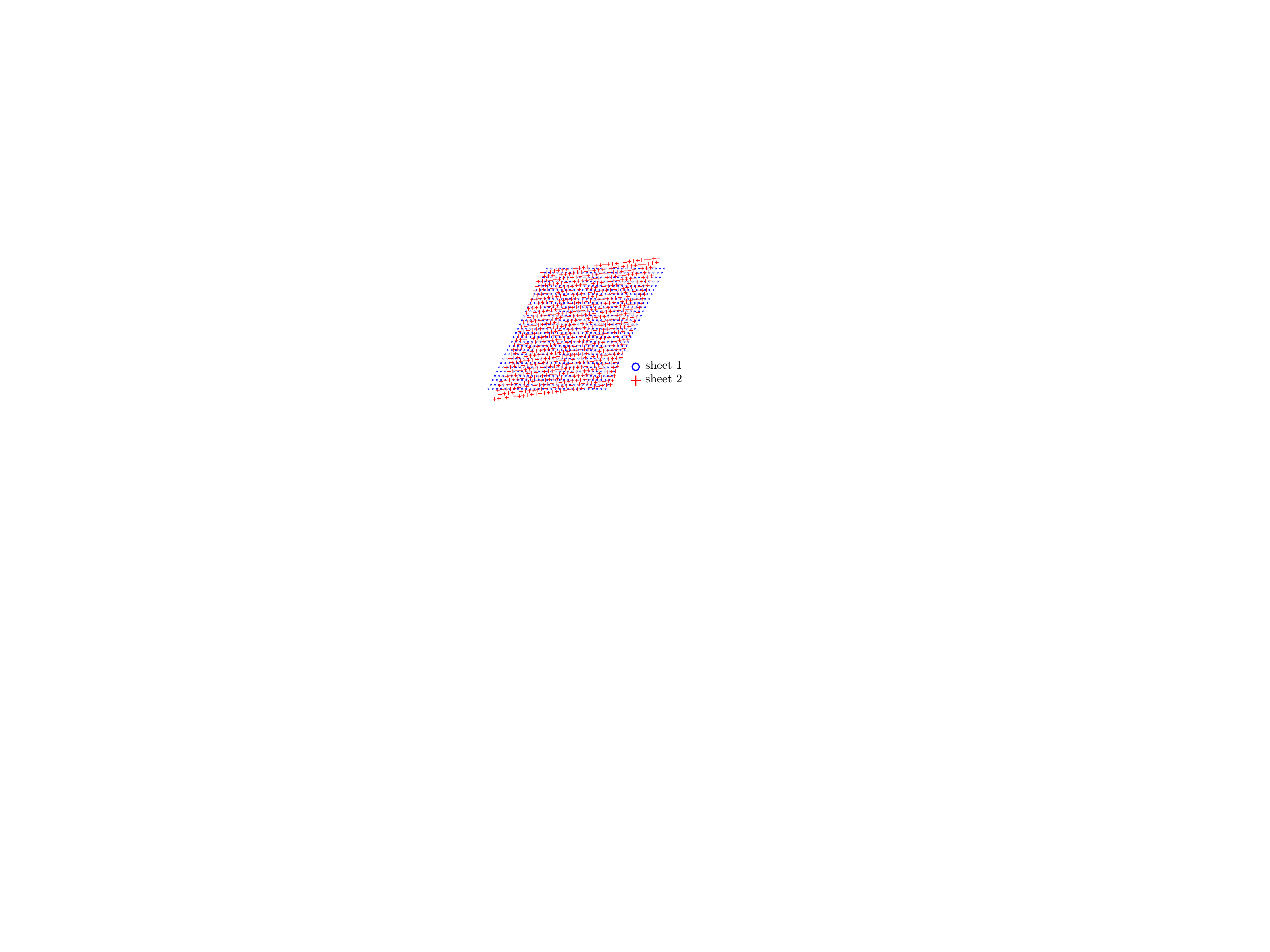}
\caption{An incommensurate hexagonal bilayer.  Sheet 1 is rotated by $\theta = 6^\circ$ relative to sheet 2.}
\label{fig:incommMoire}
\end{subfigure}
\hspace{2mm}
\begin{subfigure}{.45\textwidth}
\centering
\vspace{4mm}
\includegraphics{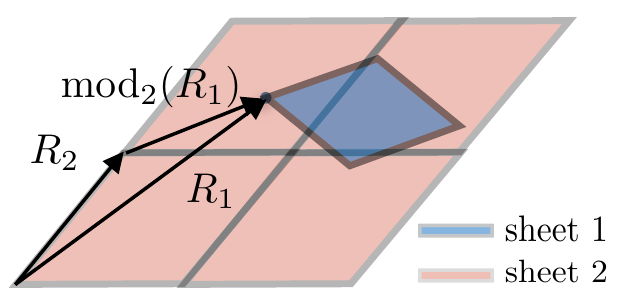}
\vspace{4mm}
\caption{$\mod_2(R_1)$ is the shift of the first lattice relative to the second lattice.}
\label{fig:bshift}
\end{subfigure}
\caption{Visualisation of incommensurate bilayer geometry.}
\end{figure}

Since the majority of material simulation tools rely on periodicity, the most
common method at present to simulate incommensurate layers is to adjust one of
the two layers slightly in order to make the system commensurate on some larger
supercell (Figure \ref{fig:incomm_comm}). In contrast we take advantage of an
equidistribution of local geometries.

To parameterize the local geometries, we define the modulation operator $\mod_j : \mathbb{R}^2 \rightarrow \Gamma_j$ on sheet $j$ for position $u \in \mathbb{R}^2$:
\begin{equation*}
\mod_j(u) := u + R_j \, \text{ where } R_j \in \R_j \text{ such that } u + R_j \in \Gamma_j.
\end{equation*}
Then the relative shift of site $R_1 \in \R_1$ is $\mod_2(R_1) \in \Gamma_2$ (See Figure \ref{fig:bshift}). The local geometry of site $R_1 \in \R_1$ is defined by
\begin{equation*}
\R_1\cup\R_2 - R_1 = \R_1 \cup ( \R_2-R_1) = \R_1 \cup(\R_2-\mod_2(R_1)).
\end{equation*}
 Hence, the local geometry is determined by the relative shift $\mod_2(R_1)$. The
same argument holds for relative configurations around a site on sheet two. A
fundamental idea in this method is that the distribution of $\mod_j(R_{P_j}) \in
\Gamma_j$ is uniform in the sense of Theorem~\ref{thm:equidistribution} below.

We let
\begin{equation*}
B_r = \{ y \in \mathbb{R}^2 : |y| < r\}, \qquad \text{for $r > 0$.}
\end{equation*}
For $j \in\{1,2\},$ we let $P_j$ be the transposition, that is, $P_1=2$ and $P_2=1.$

\begin{figure}[ht]
\centering
\begin{subfigure}{.4\textwidth}
\centering
\includegraphics[width=.8\linewidth]{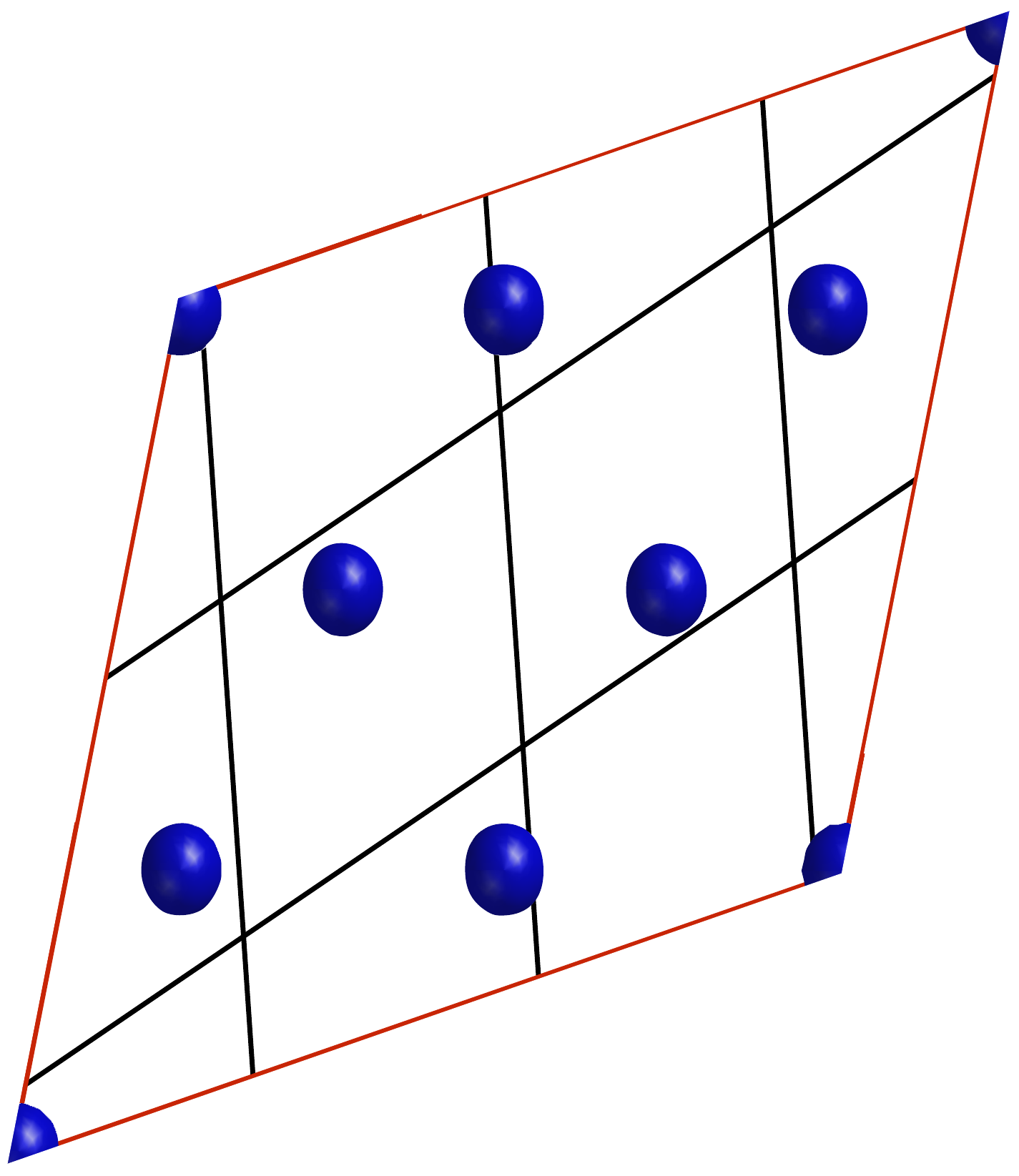}
\caption {Incommensurate Cell}
\label{fig:incomm}
\end{subfigure}
\begin{subfigure}{.4\textwidth}
\centering
\includegraphics[width=.765\linewidth]{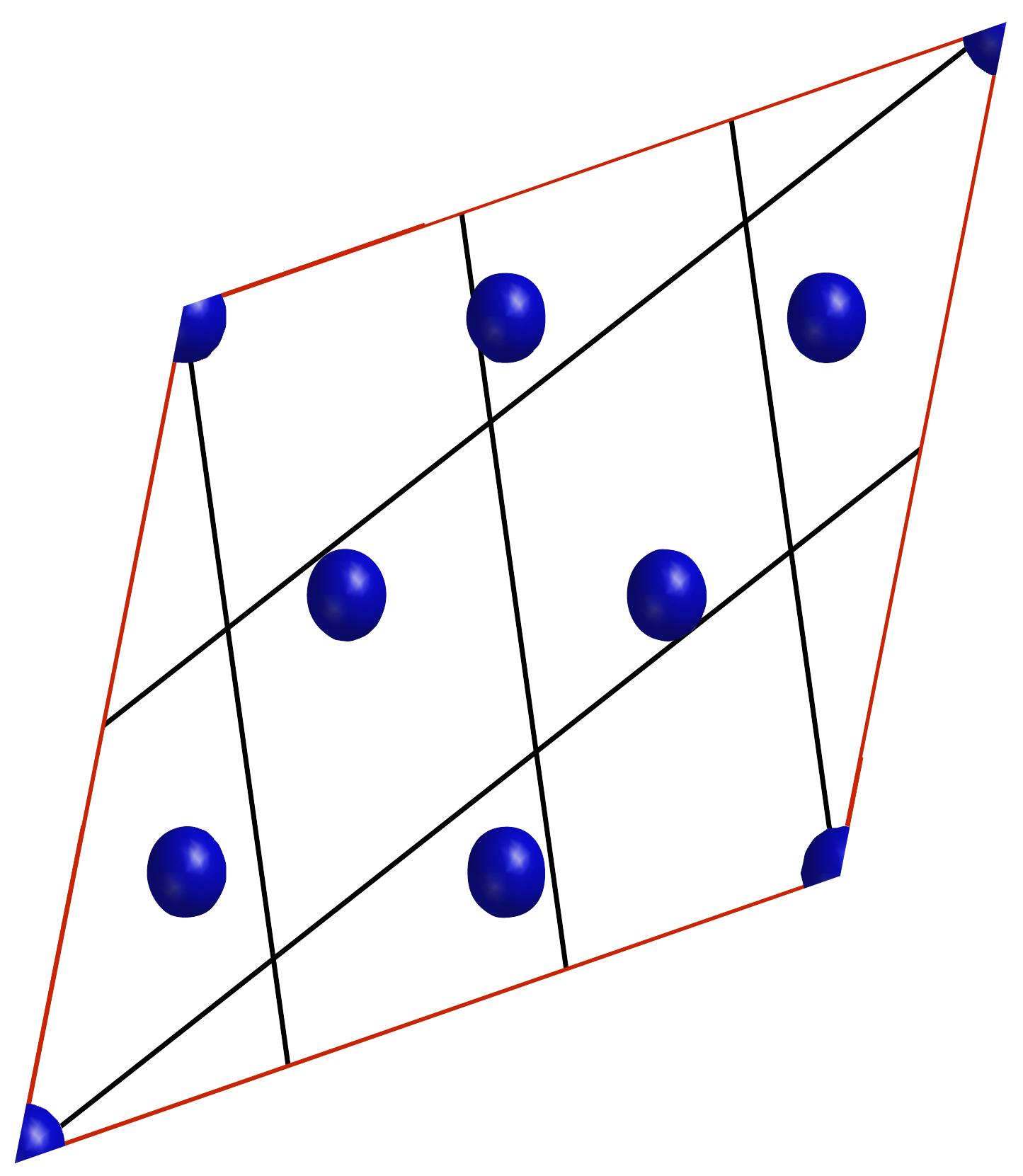}
\caption{ Commensurate Cell Approximation }
\label{fig:comm}
\end{subfigure}
   \caption{(A) Two lattices (spheres and lines) that are incommensurate;
   (B) The sphere lattice is slightly rotated to obtain a commensurate
   cell approximation.}
   \label{fig:incomm_comm}
\end{figure}

\begin{thm}
\label{thm:equidistribution}
Consider $\R_1$ and $\R_2$ incommensurate lattices embedded in $\mathbb{R}^2$
(i.e., satisfying Assumption \ref{assump:main}). Then for $g \in C_{\text{per}}(\Gamma_{P_j})$, we have
\begin{equation} \label{eq:equidistribution}
\frac{1}{\# \R_j \cap B_r} \sum_{\ell \in \R_j \cap B_r} g(\ell) \rightarrow \frac{1}{|\Gamma_{P_j}|}\int_{\Gamma_{P_j}} g(b)db.
\end{equation}
\end{thm}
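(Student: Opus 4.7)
The plan is to prove this as a Weyl-style equidistribution result via Fourier expansion of $g$ on the torus $\mathbb{R}^2/A_{P_j}\mathbb{Z}^2$. I would focus on $j=1$; the case $j=2$ is symmetric. Since $g \in C_{\text{per}}(\Gamma_2)$ admits a Fourier series
\[
g(x) = \sum_{m \in \mathbb{Z}^2} \hat{g}_m\, e^{2\pi i (A_2^{-T} m) \cdot x}, \qquad \hat{g}_0 = \frac{1}{|\Gamma_2|} \int_{\Gamma_2} g,
\]
trigonometric polynomials are uniformly dense in $C_{\text{per}}(\Gamma_2)$ and the left-hand side of \eqref{eq:equidistribution} is uniformly bounded by $\|g\|_\infty$; a standard $3\varepsilon$ argument therefore reduces the claim to verifying the equidistribution on each character $e_m(x) := e^{2\pi i (A_2^{-T} m) \cdot x}$. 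The case $m=0$ is immediate, so the heart of the matter is the non-zero Fourier modes.

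For $m \neq 0$, I would substitute $\ell = A_1 n$ with $n \in \mathbb{Z}^2$ and set $\xi_m := A_1^T A_2^{-T} m$, so the task becomes showing
\[
S_r(m) := \frac{1}{N_r} \sum_{\substack{n \in \mathbb{Z}^2 \\ A_1 n \in B_r}} e^{2\pi i \xi_m \cdot n} \longrightarrow 0, \qquad N_r := \#\{n \in \mathbb{Z}^2 : A_1 n \in B_r\}.
\]
The key arithmetic input is that $\xi_m \notin \mathbb{Z}^2$. Indeed, if $\xi_m \in \mathbb{Z}^2$, then $k := A_2^{-T}m$ would be a nonzero element of $\R_1^\vee \cap \R_2^\vee$, so $e^{2\pi i k \cdot R} = 1$ for every $R \in \R_1 \cup \R_2$. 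I would then argue that such a common dual vector forces a nontrivial translational symmetry of $\R_1 \cup \R_2$, contradicting Assumption~\ref{assump:main}.

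Given $\xi_m \notin \mathbb{Z}^2$, assume without loss of generality that $(\xi_m)_1 \notin \mathbb{Z}$. I would slice the lattice region $A_1^{-1}(B_r) \cap \mathbb{Z}^2$ at fixed $n_2$; the $n_1$ indices then fill a discrete interval, and summing the geometric series gives an inner bound $2/|1 - \exp(2\pi i (\xi_m)_1)|$ independent of the slice. Since only $O(r)$ slices are non-empty while $N_r \sim \pi r^2/|\det A_1|$, this yields $|S_r(m)| = O(1/r) \to 0$, completing the argument once combined with the $3\varepsilon$ step above.

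The hard part will be the arithmetic lemma $\R_1^\vee \cap \R_2^\vee = \{0\}$: Assumption~\ref{assump:main} is a real-space statement about absence of translation symmetries of $\R_1 \cup \R_2$, and translating it into a dual-space intersection condition requires care with lattice duality. In full generality this step actually uses the slightly stronger hypothesis that $\R_1 + \R_2$ is dense in $\mathbb{R}^2$; I would expect the proof either to rely on this stronger (and equivalent in the motivating rotated-lattice setting) formulation, or to argue a common dual vector directly yields a nontrivial translation symmetry.
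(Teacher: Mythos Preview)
Your approach mirrors the paper's almost exactly: reduce by density to trigonometric polynomials and hence to single Fourier modes; the mode $m=0$ is trivial; for $m\neq 0$ slice the index set along one lattice direction and bound the inner geometric series, yielding $O(r^{-1})$ decay since there are $O(r)$ slices against $N_r\sim r^2$. The paper additionally changes coordinates so that $A_1=\mathrm{Id}$, which is purely cosmetic.

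Your caution about the arithmetic step is exactly right, and in fact you are more careful here than the paper. The paper simply asserts, without justification, that both components of $a=A_2^{-T}m$ being rational would contradict Assumption~\ref{assump:main}. This does not follow. Take $A_1=\mathrm{Id}$ and $A_2=\bigl(\begin{smallmatrix}1&2\\ \sqrt{2}&\sqrt{3}\end{smallmatrix}\bigr)$: one checks $\R_1\cap\R_2=\{0\}$ and the two lattices are not translates of one another (they have different covolumes), so Assumption~\ref{assump:main} holds; yet $A_2^{-T}(1,2)^T=(1,0)^T\in\mathbb{Z}^2$, and for the $\Gamma_2$-periodic test function $g(x)=e^{2\pi i x_1}$ the left-hand side of \eqref{eq:equidistribution} equals $1$ for every $r$ while the right-hand side vanishes. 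The hypothesis actually needed for the argument is $\R_1^\vee\cap\R_2^\vee=\{0\}$, equivalently that $\R_1+\R_2$ is dense in $\mathbb{R}^2$, which is precisely the strengthening you anticipated. In the physically relevant rotated-bilayer setting this stronger condition does hold, but it is strictly stronger than Assumption~\ref{assump:main} in general.
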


In particular, local geometries around sheet 1 sites can be parameterized by
$\Gamma_2$, while local geometries around sheet 2 sites can be parameterized by
$\Gamma_1$.

Theorem \ref{thm:equidistribution} suggests the following strategy for
defining {\em and computing} electronic structure properties in incommensurate
heterostructures: (1) Split an observable into local contributions
from each atomic site (we will employ the local density of states);
(2) Employ Theorem~\ref{thm:equidistribution} to
demonstrate that the thermodynamic limit from finite clusters exist
(observe that \eqref{eq:equidistribution} is a sum over a finite cluster);
(3) Use the right-hand side of \eqref{eq:equidistribution} to compute
the limit quantity.

\subsection{Tight-Binding Model}
\label{subsec:tb}
Electronic structure is governed by solutions to the Schr{\" o}dinger
eigenproblem. It is typically approximated using methods such as the Kohn--Sham DFT (KS-DFT) model or the
Hartree--Fock approximation \cite{Kaxiras_2003}. For systems in the thousands of
atoms however, the standard KS-DFT calculation becomes intractable. The
tight-binding (TB) model applies further approximations, and as a result can treat larger systems
ranging in the millions of atoms.

Let $\A_i$ denote the set of indices of orbitals associated with each unit cell
of sheet $i$. We assume that $\A_i$ are finite and that $\A_1 \cap \A_2 =
\emptyset$. Then the full degree of freedom space is
\begin{equation*}
   \Omega = (\R_1 \times \A_1)\cup (\R_2 \times \A_2).
\end{equation*}
The interaction between orbitals indexed by $R\alpha$ and $R'\alpha'$ is denoted by $h_{\alpha\alpha'}(R-R')$, where $h_{\alpha\alpha'} \in C(\mathbb{R}^2)$. Although the sheets have a vertical displacement between them,
this distance is constant and hence can be encoded into $h_{\alpha\alpha'}$
(using the assumption that $\A_1 \cap \A_2 = \emptyset$).
We will further use the following assumption:
\begin{assumption}
 \label{assump:decay}
  Orbital interactions $h_{\alpha\alpha'}$ are uniformly continuous on $\mathbb{R}^2$ and decay exponentially, that is,
   \begin{equation*}
   |h_{\alpha\alpha'}(x)| \leq Ce^{-\tilde{\gamma}|x|} \qquad \text{for } x \in \mathbb{R}^2.
   \end{equation*}
\end{assumption}
This applies in most scenarios, since in most tight-binding models the orbitals are {\em tightly bound} around the atomic sites \cite{Kaxiras_2003}, or  are exponentially decaying.
We then formally define a matrix $H$ such that
\begin{equation*}
H_{R\alpha,R'\alpha'} = h_{\alpha\alpha'}(R-R').
\end{equation*}
This is an infinite matrix, hence the eigenproblem
\begin{equation*}
H\psi = E\psi
\end{equation*}
for $\psi \in \mathbb{C}^\mathbb{N}$ cannot be solved directly.
Instead, we will define a class of observables for the infinite system by first
defining them for finite sub-systems and then passing to the limit in \S~
\ref{subsec:therm}.

For $\tilde\Omega \subset \Omega$ with $\# \tilde\Omega = n$ the associated
hamiltonian is $\Mat = (H_{ij})_{i,j \in \tilde\Omega} \in M_n(\mathbb{C})$,
where $M_n(\mathbb{C})$ denotes the set of $n \times n$ Hermitian matrices over $\mathbb{C}$.
The {\em density of states} for $\tilde\Omega$ can be defined via its action
on test functions, or, observables $g$, by
\begin{equation*}
   \D[\Mat](g) = \frac{1}{n} \Tr [g(\Mat)], \hspace{2mm} g \in C(\mathbb{R}).
\end{equation*}
(We will later slightly extend the space of observables.)
For example, we can consider the bond energy $\D[\Mat](U_T)$, where $U_T(\epsilon) = \epsilon F_T(\epsilon)$ and
   $F_T(\epsilon) = (1+e^{(\epsilon-\mu)/kT})^{-1}$
 is the Fermi function. Formally, the value of the observable for
the infinite system $\Omega$ is the limit of $\D[\Mat](g)$ as $\tilde\Omega \uparrow \Omega$.

For future reference we remark that, since $H$ is defined in terms of the
lattices $R_j$ and the hopping functions $h_{\alpha\alpha'}$, we will say
``$H$ satisfies Assumptions~\ref{assump:decay} and~\ref{assump:main}'' to mean
that ``$R_1, R_2$ satisfy Assumption~\ref{assump:main} and $h_{\alpha\alpha'}$
satisfy Assumption~\ref{assump:decay}''.

\subsection{Local Density of States}
\label{subsec:local}
%
The next step is to define the local density of states distribution, which will allow us to identify local site contribution to an observable.
Consider a finite sub-system $\tilde{\Omega} \subset \Omega$
with associated hamiltonian $\Mat \in M_n(\mathbb{C})$, then the local density of states distribution is defined as
\begin{equation*}
\D_k[\Mat](g) = [g(\Mat)]_{kk}, \qquad k \in \tilde\Omega, \hspace{1mm}g \in C(\mathbb{R}).
\end{equation*}
Note that
\begin{equation*}
   \frac{1}{n}\sum_{k\in\tilde\Omega} \D_k[\Mat](g) = \D[\Mat](g).
\end{equation*}
This reformulation puts us very close to the setting of Theorem~\ref{thm:equidistribution}.
It remains to control the dependence of $\D_k[\Mat](g)$ on $\tilde{\Omega}$, which
we will achieve in the next section by fixing $k$ and letting $\tilde\Omega \uparrow \Omega$
while controlling the error.

\begin{figure}[ht]
\centering
\includegraphics[width=.5\linewidth]{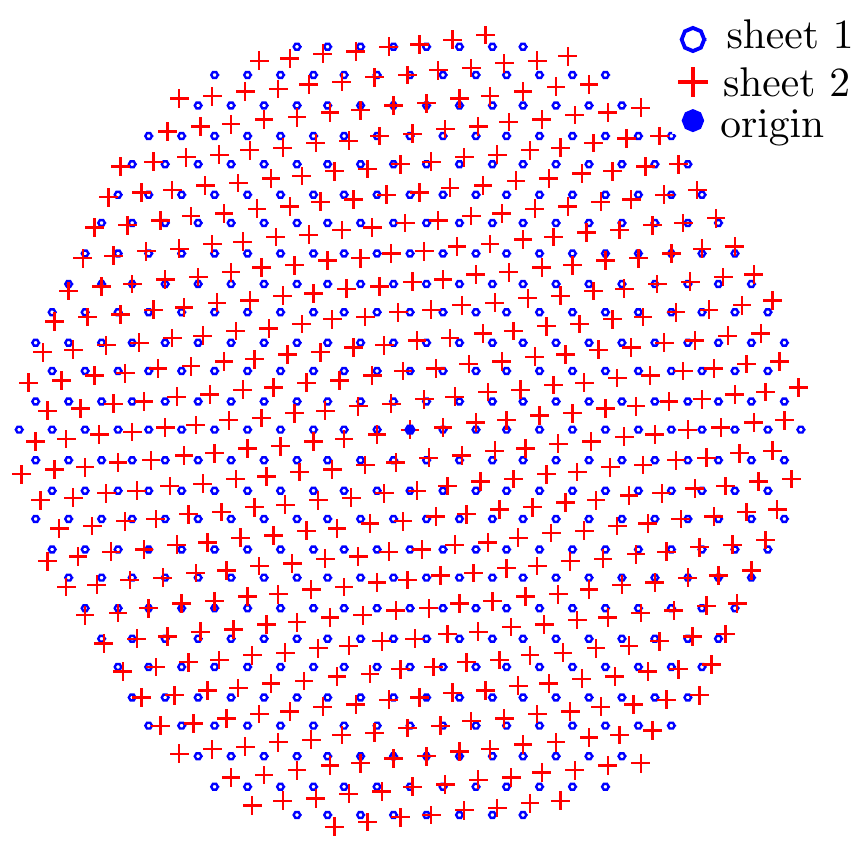}
\caption{All the sites in $\Omega_r$ for a hexagonal bravais lattice. The central site for sheet 1 is highlighted. }
\label{fig:locality}
\end{figure}

Towards that end we now specify a sequence of local degree of freedom spaces,
\begin{equation*}
   \Omega_r = \bigl[\R_1 \cap B_r\bigl]\times \mathcal{A}_1
            \, \cup \, \bigl[\R_2 \cap B_r\bigl]\times \mathcal{A}_2,
            \qquad \text{ for } r > 0;
\end{equation*}
see also Figure \ref{fig:locality}. For $r > 0$ and $b \in \mathbb{R}^2$ we define
 $H_{r,j}(b) \in \MatSpace$ by
\begin{equation*}
[H_{r,j}(b)]_{R\alpha,R'\alpha'} = h_{\alpha\alpha'}\bigl(b(\delta_{\alpha \in \mathcal{A}_{P_j}} - \delta_{\alpha' \in \mathcal{A}_{P_j}})+R-R'\bigr),
\end{equation*}
for $R\alpha, R'\alpha' \in \Omega_r$.
Physically, $H_{r,j}(b)$ describes a cluster of radius $r$ of the bilayer
system in which the sheet $P_j$ is shifted by $b$.
The local configuration is determined by the relative shift, so $b$ indexes which local configuration we are considering. Then
\begin{equation}
   \label{eq:H_rj_b}
\D_\alpha[H_{r,j}(b)]:=\D_{0\alpha}[H_{r,j}(b)]\quad\text{for}\quad\alpha \in \A_j
 \end{equation}
 is an approximate local density of states distribution of the infinite system at a local configuration indexed by $b \in \Gamma_{P_j}$ at orbital $\alpha$ on sheet $j$.

\subsection{Thermodynamic Limit}
\label{subsec:therm}
We now consider the limit as $r \rightarrow \infty$ of
the LDoS, which will allow us to define the DoS for the infinite system. Let
\begin{equation*}
   E[\M] := \sup_{r > 0,\, j \in \{1,2\}}\biggl[\sup_{b \in \Gamma_j} \|H_{r,j}(b)\|_2 \biggr]< \infty,
\end{equation*}
where $\| \tilde{H} \|_2 := \sup_{\psi \in \mathbb{C}^n \setminus \{0\}} \|\tilde{H}\psi\|_2 / \| \psi \|_2$, for $\tilde{H} \in M_n(\mathbb{C})$.
Then the local density of states distribution will be supported on the interval
\begin{equation*}
   \Msup = [-E[\M],E[\M]].
\end{equation*}
We can now generalize observables to be functions $g \in C(\Msup)$ and supply this space with the norm
\begin{equation*}
\|g\|_\infty := \sup_{x \in \Msup}|g(x)|, \qquad \text{for } g \in C(\Msup).
\end{equation*}

For $U \subset \mathbb{C},$ we define the distance
\begin{equation*}
d(U,\Msup) =\inf_{z \in U, z' \in \Msup} |z-z'|.
\end{equation*}
This is a bound on the distance between $U$ and the spectrum.
To pass to the limit in the LDoS and later in the DoS, we narrow down admissible test functions to
\begin{equation*}
   \Lambda := \{ g \in C(\mathbb{R}) ~|~ \text{ $g$ is analytic on $\Msup$} \}.
\end{equation*}
If $g \in \Lambda$, then there exists $\tilde{d} > 0$ such that $g \in
\Lambda_{\tilde{d}}$, which is defined as
\begin{equation*}
   \Lambda_{\tilde{d}} := \{ g \in C(\mathbb{R}) ~|~ \text{ $g$ is analytic at $z$ for } d(z, \Msup) \leq \tilde{d}\}.
\end{equation*}

\begin{thm}
\label{thm:conv}
 (1) Suppose that $H$ satisfies Assumptions~\ref{assump:main} and~\ref{assump:decay}. Then, for $\alpha \in \mathcal{A}_j$, there exists a function $\D_\alpha[\M] : \Gamma_{P_j} \times C(\Msup) \rightarrow \mathbb{C}$ such that, for $g \in \Lambda$,
\begin{equation*}
   \D_{\alpha}[H_{r,j}(b)](g) \rightarrow \D_\alpha[\M](b,g) \hspace{2mm} \text{ as } r\rightarrow \infty.
\end{equation*}
(The distribution $\D_\alpha[\M](b,g)$ is the local density of states for the infinite system.)

(2) The map $g \mapsto \D_{\alpha}[\M](b,g)$ is a bounded linear functional, more precisely,
\begin{equation*}
\bigl|\D_{\alpha}[\M](b,g)\bigr| \leq \|g\|_\infty \qquad \text{for }
g \in C(\Msup).
\end{equation*}

(3) There exist constants $C, \gamma' > 0$ such that, for $\tilde{d} > 0$
   and $g \in \Lambda_{\tilde{d}}$,
   \begin{equation*}
      |\D_\alpha[\M](b,g) - \D_\alpha[H_{r,j}(b)](g)|
      \leq C\tilde{d}^{-2} {\sup_{d(z,\Msup) < \tilde{d}}} |g(z)| e^{-\gamma' \tilde{d} r}.
   \end{equation*}
\end{thm}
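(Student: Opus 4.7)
The plan is to combine the holomorphic functional calculus with a Combes--Thomas resolvent decay estimate. Since $\sigma(H_{r,j}(b)) \subset \Msup$ for every admissible $r,b,j$ by definition of $E[\M]$, for any $g \in \Lambda_{\tilde d}$ I can write
\begin{equation*}
\D_\alpha[H_{r,j}(b)](g) = [g(H_{r,j}(b))]_{0\alpha,0\alpha} = \frac{1}{2\pi i}\oint_\Gamma g(z)\,\bigl[(z - H_{r,j}(b))^{-1}\bigr]_{0\alpha,0\alpha}\,dz,
\end{equation*}
where $\Gamma$ is a fixed rectifiable contour lying at distance exactly $\tilde d$ from $\Msup$ (for instance, the boundary of its $\tilde d$-neighbourhood), on which $g$ is analytic by definition of $\Lambda_{\tilde d}$; the length of $\Gamma$ is bounded by $C(|\Msup| + \tilde d)$, which is uniformly bounded. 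This reduces (1) and (3) to controlling one diagonal entry of the resolvent as $r \to \infty$.

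The key tool is a Combes--Thomas estimate: using Assumption~\ref{assump:decay}, a conjugation argument with $e^{\eta\cdot R}$ (optimising $|\eta| \sim \tilde d$) yields a constant $\gamma' > 0$, independent of $r,b,j$, such that
\begin{equation*}
\bigl| [(z - H_{r,j}(b))^{-1}]_{R\alpha, R'\alpha'} \bigr| \leq \frac{C}{\tilde d}\, e^{-\gamma' \tilde d\, |R-R'|}
\end{equation*}
whenever $d(z,\Msup) \geq \tilde d$. To show the relevant matrix element is Cauchy in $r$, I would compare clusters of radii $r$ and $r' > r$ via the resolvent identity
\begin{equation*}
(z - H_{r',j}(b))^{-1} - (z - H_{r,j}(b))^{-1} = (z - H_{r',j}(b))^{-1}\bigl(H_{r',j}(b) - H_{r,j}(b)\bigr)(z - H_{r,j}(b))^{-1},
\end{equation*}
interpreting both sides on $\Omega_{r'}$ after extending $H_{r,j}(b)$ by zero. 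The perturbation is supported on pairs of sites at least one of which lies outside $B_r$, so the two resolvent decays combine with the hopping decay from Assumption~\ref{assump:decay} (summed over the 2D lattice, which produces the $\tilde d^{-2}$ prefactor) to bound the $(0\alpha, 0\alpha)$ entry of the difference by $C\tilde d^{-2}\,e^{-\gamma' \tilde d r}$. Hence the limit $G_\alpha(z;b) := \lim_{r\to\infty} [(z - H_{r,j}(b))^{-1}]_{0\alpha, 0\alpha}$ exists uniformly on $\Gamma$, and setting $\D_\alpha[\M](b,g) := \tfrac{1}{2\pi i}\oint_\Gamma g(z)\,G_\alpha(z;b)\,dz$ proves (1); inserting the Cauchy estimate into the contour integral and absorbing the length of $\Gamma$ into $C$ yields (3).

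For (2), the spectral decomposition of any finite Hermitian $M$ with eigenpairs $(\lambda_i,\psi_i)$ gives
\begin{equation*}
\bigl| [g(M)]_{kk} \bigr| = \Bigl| \sum_i g(\lambda_i)\, |\langle e_k, \psi_i\rangle|^2 \Bigr| \leq \|g\|_\infty,
\end{equation*}
since $\sum_i |\langle e_k, \psi_i\rangle|^2 = 1$. Therefore $|\D_\alpha[H_{r,j}(b)](g)| \leq \|g\|_\infty$; passing to the limit from (1) yields the bound for $g \in \Lambda$, and since $\Lambda \supset \mathbb{R}[x]$ is dense in $C(\Msup)$ by Stone--Weierstrass, the functional extends uniquely to $C(\Msup)$ with the same norm bound. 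The main technical obstacle is the Combes--Thomas step with the correct $\tilde d$-scaling uniformly in $r$, $b$, $j$: one must choose the tilt $\eta$ so that $e^{\eta\cdot R}(z - H_{r,j}(b))e^{-\eta\cdot R}$ remains invertible with norm $\lesssim 1/\tilde d$, which requires the localisation rate $\tilde\gamma$ of Assumption~\ref{assump:decay} to strictly dominate $|\eta|$, so that the non-Hermitian part of the conjugated operator remains a small perturbation controllable by $\tilde d$. Once this is established, (1)--(3) all follow from the contour representation.
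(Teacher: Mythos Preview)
Your proposal is correct and follows essentially the same approach as the paper: holomorphic functional calculus on a contour around $\Msup$, a Combes--Thomas style off-diagonal resolvent decay (which the paper imports as Lemma~\ref{lemma:resolvent} from \cite{ChenOrtnerTB}), a perturbative Cauchy estimate for the $(0\alpha,0\alpha)$ resolvent entry, and then the uniform bound $|\D_\alpha[H_{r,j}(b)](g)|\le\|g\|_\infty$ together with density of $\Lambda$ in $C(\Msup)$ to extend. The only cosmetic difference is that you apply the second resolvent identity in one step, whereas the paper interpolates via $B(\lambda)=z-\tilde H_j^{r'}(b)-\lambda(\tilde H_j(b)-\tilde H_j^{r'}(b))$ and integrates $f'(\lambda)$ from $0$ to $1$; the two arguments are equivalent and yield the same estimate.
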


We next analyze the regularity of the map $b \mapsto \D_\alpha[\M](b,g)$ for
fixed $g$, which will allow us to integrate with respect to $b$. Let
$n\in\mathbb{Z},\, m = (m_1, m_2) \in\mathbb{N}^2$ such that $m_1 + m_2 \leq
n$. Then, for $f \in C^n(\mathbb{R}^2)$, we employ the usual multi-index
notation
\begin{equation*}
   \partial_m f = \frac{\partial^{m_1+m_2} f}{\partial x_1^{m_1} \partial x_2^{m_2}}.
\end{equation*}

\begin{thm}
\label{thm:smooth}
Suppose $h_{\alpha\alpha'} \in C^n(\mathbb{R}^2)$ for $n \in \mathbb{N}\cup\{0, \infty\}$,
 $\partial_{b_1}^m\partial_{b_2}^{m'}h_{\alpha\alpha'}$ is uniformly continuous
 for $m+m' \leq n$ and satisfies
\begin{equation*}
|\partial_{b_1}^m\partial_{b_{2}}^{m'} h_{\alpha\alpha'}(r)| \leq Ce^{-\gamma'r}.
\end{equation*}
 Then, for $\alpha \in \mathcal{A}_j$ and $g \in \Lambda$,
\begin{equation*}
\D_\alpha[H](\cdot,g) \in C_{\rm per}^n(\Gamma_{P_j}).
\end{equation*}
\end{thm}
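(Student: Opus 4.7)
The plan is to prove that $b \mapsto \D_\alpha[H_{r,j}(b)](g)$ is $C^n$ with derivatives bounded uniformly in $r$, and that each of these derivatives converges uniformly in $b$ as $r \to \infty$. Since uniform convergence of the derivatives preserves $C^n$ regularity, this yields $\D_\alpha[H](\cdot,g) \in C^n(\mathbb{R}^2)$. Periodicity is then obtained by a separate lattice-shift argument that reduces to Theorem~\ref{thm:conv}(3).

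First I would establish that the matrix-valued map $b \mapsto H_{r,j}(b)$ is $C^n$ with $\|\partial_m H_{r,j}(b)\|_2 \leq C_m$ uniformly in $r$ and $b$ for every multi-index $m$ with $m_1+m_2 \leq n$. Differentiating the entry formula for $H_{r,j}(b)$ entrywise yields $\partial_m h_{\alpha\alpha'}$ at the same shifted lattice argument, and the exponential decay hypothesis combined with a Schur (max-row-sum) estimate, which uses absolute summability of $e^{-\gamma'|R-R'|}$ over the lattice, gives the required uniform operator-norm bound.

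Next, since $g \in \Lambda$, fix $\tilde d>0$ with $g \in \Lambda_{\tilde d}$ and choose a rectifiable contour $\C$ in the region of analyticity enclosing $\Msup$ at distance bounded below by some positive constant. Because $\sigma(H_{r,j}(b)) \subset \Msup$ uniformly, the Cauchy representation
\[
   \D_\alpha[H_{r,j}(b)](g) = \frac{1}{2\pi i}\oint_{\C} g(z)\bigl[(z-H_{r,j}(b))^{-1}\bigr]_{0\alpha,0\alpha}\,dz
\]
may be differentiated in $b$ under the integral sign. Iterating the resolvent identity $\partial(z-H)^{-1} = (z-H)^{-1}(\partial H)(z-H)^{-1}$ expresses $\partial_m \D_\alpha[H_{r,j}(b)](g)$ as a contour integral of a finite combinatorial sum of products of resolvents interleaved with factors $\partial_{m'}H_{r,j}(b)$, each summand bounded by $C_m \tilde d^{-(m_1+m_2+1)}$ uniformly in $r$ and $b$ by the previous step. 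For convergence of these derivatives as $r,r' \to \infty$, I would insert the difference $H_{r,j}(b)-H_{r',j}(b)$ into each product via the second resolvent identity, exactly as in the proof of Theorem~\ref{thm:conv}(3); the differentiated hopping entries $\partial_m h_{\alpha\alpha'}$ retain their exponential decay, so the difference of derivatives decays exponentially in $\min(r,r')$ uniformly in $b$, and the derivative sequence is $C^0$-Cauchy.

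For periodicity, fix $\nu \in \Z$ and compare $H_{r,j}(b)$ with $H_{r,j}(b+A_{P_j}\nu)$ after permuting the sheet-$P_j$ indices by $R \mapsto R - A_{P_j}\nu$. A direct computation using the entry formula shows the permuted matrix coincides with $H_{r,j}(b)$ except on sheet-$P_j$ orbitals whose index lies in the annular symmetric difference $B_r \Delta (B_r - A_{P_j}\nu)$ of width comparable to $|A_{P_j}\nu|$ near $\partial B_r$. The resulting boundary perturbation affects the LDoS at the origin by an amount vanishing exponentially in $r$, by the same contour estimate as in Theorem~\ref{thm:conv}(3), so the limits agree and $\D_\alpha[H](\cdot,g)$ is $A_{P_j}\Z$-periodic. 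The main obstacle is the combinatorial bookkeeping in the iterated resolvent expansion for $\partial_m$: the number of summands grows with $m_1+m_2$, and closing the uniform-in-$r$ estimate hinges on both the fixed spectrum-enclosing contour (providing the resolvent bound $\tilde d^{-1}$) and the $\ell^1$-summability over the lattice of every derivative $\partial_m h_{\alpha\alpha'}$.
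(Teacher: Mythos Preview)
Your proposal is correct and follows essentially the same route as the paper: both represent the LDoS via a resolvent contour integral, differentiate in $b$ using the iterated resolvent identity, and obtain the Cauchy property of the derivatives from entrywise exponential decay of the resolvent (the paper packages this as Lemma~\ref{lemma:smooth}, built on Lemmas~\ref{lemma:resolvent} and~\ref{lemma:resolvent2}). The only cosmetic differences are that the paper works entrywise throughout rather than detouring through an operator-norm Schur bound for uniform boundedness, it handles the interchange of limit and derivative by an explicit modulus-of-continuity argument rather than the standard uniform-convergence-of-derivatives theorem you invoke, and your lattice-shift periodicity argument is in fact more carefully stated than the paper's, which simply asserts periodicity in Lemma~\ref{lemma:smooth}.
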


Our next objective is to rigorously define the density of states distribution
for the infinite incommensurate bilayer system $\M$. Taking a sequence of finite
incommensurate clusters surrounded by vacuum that grow towards infinity and
combining our results on the equidistribution of local configurations with the
convergence of the local density of states we obtain the following
representation formula.

\begin{thm}
\label{thm:thermal}
   Suppose that $H$ satisfies Assumptions~\ref{assump:main} and~\ref{assump:decay}. Then there exists a bounded linear functional $\D[\M] : C(\Msup) \rightarrow \mathbb{C}$ such that, for $g \in \Lambda$, we have
\begin{equation*}
\D[H_{r,j}(0)](g)
\rightarrow \D[\M](g) \hspace{2mm} \text{ as } r\rightarrow\infty,\quad\text{for }j=1,\,2,
\end{equation*}
and
\begin{equation*}
\D[\M](g) = \nu \sum_{j=1}^2\sum_{\alpha \in \mathcal{A}_j} \int_{\Gamma_{P_j}} \D_\alpha[\M](b,g)db,
\end{equation*}
where
\begin{equation*}
\nu = \frac{1}{|\mathcal{A}_2|\cdot|\Gamma_1|+|\mathcal{A}_1|\cdot|\Gamma_2|}.
\end{equation*}

If $g \in \Lambda_{\tilde{d}}$, then we have the explicit error bound
\begin{equation*}
\biggl|\D[\M](g) - \nu \sum_{j=1}^2\sum_{\alpha \in \mathcal{A}_j}  \int_{\Gamma_{P_j}}\D_{\alpha}[H_{r,j}(b)](g)db\biggr| \leq C\tilde{d}^{-2}
   \sup_{d(z,\Msup) < \tilde{d}} |g(z)|e^{-\gamma\tilde{d} r},
\end{equation*}
where $C, \gamma$ are independent of $r, \tilde{d}$ and $g$.
\end{thm}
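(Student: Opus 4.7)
The plan is to define $\D[H]$ on $\Lambda$ by the representation formula, extend by density to $C(\Msup)$, and then separately establish convergence of $\D[H_{r,j}(0)](g)$ to it. First I would set, for $g \in \Lambda$,
\begin{equation*}
\D[H](g) := \nu \sum_{j=1}^2 \sum_{\alpha \in \A_j} \int_{\Gamma_{P_j}} \D_\alpha[H](b,g)\,db;
\end{equation*}
the integrand is continuous in $b$ by Theorem \ref{thm:smooth}, and the pointwise bound $|\D_\alpha[H](b,g)| \leq \|g\|_\infty$ of Theorem \ref{thm:conv}(2) combined with the identity $\nu(|\A_1||\Gamma_2|+|\A_2||\Gamma_1|)=1$ gives $|\D[H](g)| \leq \|g\|_\infty$. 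Since $\Lambda$ contains the polynomials, it is dense in $C(\Msup)$ by the Weierstrass theorem, so $\D[H]$ extends uniquely to a bounded linear functional on $C(\Msup)$ of norm at most $1$.

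For the convergence $\D[H_{r,j}(0)](g) \to \D[H](g)$, I would decompose
\begin{equation*}
\D[H_{r,j}(0)](g) = \frac{1}{n_r}\sum_{i=1}^2 \sum_{\alpha \in \A_i} \sum_{R \in \R_i \cap B_r} \D_{R\alpha}[H_{r,j}(0)](g),
\end{equation*}
where $n_r = |\A_1|\,\#(\R_1\cap B_r) + |\A_2|\,\#(\R_2\cap B_r)$. A standard lattice counting argument gives $\#(\R_i\cap B_r)/\pi r^2 \to 1/|\Gamma_i|$, hence $\#(\R_i\cap B_r)/n_r \to \nu|\Gamma_{P_i}|$. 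The crucial ingredient is a locality lemma: for $R \in \R_i \cap B_r$ and $b = \mod_{P_i}(R)$,
\begin{equation*}
\bigl|\D_{R\alpha}[H_{r,j}(0)](g) - \D_\alpha[H](b,g)\bigr| \leq C\tilde d^{-2}\sup_{d(z,\Msup)<\tilde d}|g(z)|\, e^{-\gamma'\tilde d(r-|R|)}.
\end{equation*}
This follows from the same Combes--Thomas resolvent estimate and Cauchy integral representation $g(H) = \frac{1}{2\pi i}\oint g(z)(z-H)^{-1}dz$ that underpins Theorem \ref{thm:conv}(3), applied after translating the cluster so that $R$ sits at the origin: inside a ball of radius $r-|R|$ the translated cluster coincides with $H_{r-|R|,\,i}(b)$, and resolvent decay makes the central LDoS insensitive to the matrix entries outside that ball.

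With this lemma I would split the inner sum at an intermediate scale $\rho(r)$ satisfying $\rho(r)\to\infty$ and $\rho(r)/r \to 0$ (e.g.\ $\rho=\sqrt r$). Boundary sites $|R|>r-\rho$ occupy an annulus of area $O(\rho r)$, so their contribution to $\D[H_{r,j}(0)](g)$ is $O(\rho\|g\|_\infty/r) = o(1)$. For bulk sites $|R|\le r-\rho$ the locality lemma replaces $\D_{R\alpha}[H_{r,j}(0)](g)$ by $\D_\alpha[H](\mod_{P_i}(R),g)$ at cost $O(e^{-\gamma'\tilde d\rho})$ per site, which is uniformly $o(1)$. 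Then Theorem \ref{thm:equidistribution}, applied to the continuous periodic function $b\mapsto \D_\alpha[H](b,g)$, turns the resulting site average into $|\Gamma_{P_i}|^{-1}\int_{\Gamma_{P_i}} \D_\alpha[H](b,g)\,db$; multiplying by the limit $\nu|\Gamma_{P_i}|$ of $\#(\R_i\cap B_r)/n_r$ and summing over $i,\alpha$ recovers $\D[H](g)$.

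The quantitative bound in the statement is the easy part: from the representation,
\begin{equation*}
\D[H](g) - \nu\sum_{j,\alpha}\int_{\Gamma_{P_j}}\D_\alpha[H_{r,j}(b)](g)\,db = \nu\sum_{j,\alpha}\int_{\Gamma_{P_j}}\bigl(\D_\alpha[H](b,g)-\D_\alpha[H_{r,j}(b)](g)\bigr)\,db,
\end{equation*}
and Theorem \ref{thm:conv}(3) supplies a $b$-uniform pointwise bound of $C\tilde d^{-2}\sup_{d(z,\Msup)<\tilde d}|g(z)|\,e^{-\gamma'\tilde d r}$, which integrates to the stated estimate over the finite collection of compact cells $\Gamma_{P_j}$. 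The main technical obstacle is the locality lemma for the non-centered LDoS: Theorem \ref{thm:conv} is stated only for the central site of the cluster, so one must reopen the resolvent-decay argument and verify carefully that the portion of $H_{r,j}(0)$ within distance $r-|R|$ of $R\alpha$ agrees, after translation, with the corresponding portion of $H_{r-|R|,i}(\mod_{P_i}(R))$.
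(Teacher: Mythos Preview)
Your proposal is correct and follows essentially the same route as the paper: define $\D[H]$ by the integral formula, split the cluster into a bulk region and a thin boundary annulus, replace bulk LDoS values by $\D_\alpha[H](\mod_{P_i}(R),g)$ via a locality estimate, and then invoke the equidistribution theorem; the quantitative bound is obtained exactly as you say, by integrating Theorem~\ref{thm:conv}(3). Your splitting parameter $\rho(r)$ with $\rho\to\infty$, $\rho/r\to 0$ is the same device as the paper's choice of $\eta\in(0,1)$ with $r(1-\eta)\to\infty$, $\eta\to 1$; and the ``locality lemma'' you flag as the main obstacle is handled in the paper not by re-centering and re-applying Theorem~\ref{thm:conv}, but by a direct resolvent comparison (Lemma~\ref{lemma:resolvent2}) which is already stated for arbitrary indices $k,\ell$ and arbitrary enclosing index sets $\tilde\Omega\supset\Omega_{r'}$, so no reopening is needed---your translation argument would also work and is equivalent.
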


\begin{remark}
The finite systems employed in the thermodynamic limit are defined by the matrices $H_{r,j}(0)$ for $j=1,2.$ They represent finite incommensurate clusters surrounded by vacuum. Since the boundary Hamiltonian entries are not chosen by DFT calculations or experimental values they will not be accurate. However, as long as the boundary coefficients satisfy Assumption \ref{assump:decay}, the limit of the density of states $\D[H_{r,j}(0)]$ will be independent of the choice of boundary terms.
\end{remark}

\begin{remark}
   For the sake of convenience, we have chosen a circular shape for the
   approximating domains. Weaker requirements can be readily formulated, e.g.,
   domains $\widetilde{\Omega}$ should {\em contain} balls centered at the origin with radii
   growing to infinity, while at the same time keeping a suitable bound on the
   surface area to volume ratio.
\end{remark}

\begin{remark}
\label{remark:regularity}
The Riesz-Markov-Kakutani Representation Theorem states that the dual space of the continuous compact functions are the Radon measures. Since all our density of states and local density of states operators are continuous linear functionals over the space of compact continuous functions, they are all Radon measures.
\end{remark}

\begin{remark}
This methodology can easily be extended to three or more incommensurate layers, but at the cost of multiple integrals, since one must integrate over all relative shifts between the layers. The local density of states can be easily analyzed for multiple layers without adding much to the cost.
\end{remark}

\section{Numerical Simulations}
\label{sec:numerics}
\subsection{Quadrature}
\label{subsec:disc}

To compute the integrals occuring in Theorem \ref{thm:thermal} numerically, we
can use the smoothness properties from Theorem \ref{thm:smooth}, which can be
strengthened further by assuming analyticity on $h_{\alpha\alpha'}$.


\begin{thm}
\label{thm:smooth_rate}
Assume $h_{\alpha\alpha'}$ is analytic and satisfies Assumption \ref{assump:decay}. Let
\begin{equation*}
\DiscSpace_j = \left\{ A_j \begin{pmatrix}i_1/\DiscSize\\i_2/\DiscSize \end{pmatrix} : 0 \leq i_1,i_2 < \DiscSize \right\}
\end{equation*}
be the uniform discretization sample points. Then we have
\begin{align*}
   & \biggl|\frac{|\Gamma_{P_j}|}{\DiscSize^2}\sum_{b \in \DiscSpace_{P_j}}\sum_{\alpha \in \mathcal{A}_j} \D_\alpha[\M](b, g)-\sum_{\alpha \in \mathcal{A}_j} \int_{\Gamma_{P_j}} \D_\alpha[\M](b,g)db\biggl| \\
& \hspace{6cm}
      \leq C\tilde{d}^{-1} \sup_{z :~ d(z,\Msup) < \tilde{d}}|g(z)| e^{-\gamma''\tilde{d}\DiscSize}
\end{align*}
for some $\gamma'' > 0$.
\end{thm}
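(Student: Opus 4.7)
The plan is to upgrade the $C^n$ smoothness provided by Theorem \ref{thm:smooth} to a complex-analytic extension of $b \mapsto \D_\alpha[\M](b,g)$ on a complex strip whose width scales linearly with $\tilde d$, and then invoke the classical exponential convergence of the tensor-product trapezoidal rule for periodic analytic functions on a torus. The strip width $\sim \tilde d$ will produce the $e^{-\gamma'' \tilde d \DiscSize}$ factor, while the supremum of the extension on that strip will contribute the $\tilde d^{-1}\sup|g|$ prefactor.

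First I would extend $b \mapsto H_{r,j}(b)$ to complex arguments. Since $h_{\alpha\alpha'}$ is analytic on $\mathbb{R}^2$ and exponentially decaying (Assumption \ref{assump:decay}), it extends holomorphically to a strip $\{y \in \mathbb{C}^2 : |\Imag y| < \eta_0\}$ with the analogous exponential bound along horizontal lines. The matrix entries $[H_{r,j}(b)]_{R\alpha,R'\alpha'}$ thus become holomorphic functions of $b$ on this strip, and a Schur-test using Assumption \ref{assump:decay} yields the $r$-uniform Lipschitz bound $\|H_{r,j}(b) - H_{r,j}(\Real b)\|_2 \leq C_0 |\Imag b|$. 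Hence the numerical range of $H_{r,j}(b)$ stays within distance $C_0 |\Imag b|$ of the real interval $\Msup$. For $g \in \Lambda_{\tilde d}$, I would then use holomorphic functional calculus to write
\[
\D_\alpha[H_{r,j}(b)](g) = \frac{1}{2\pi i}\oint_{\mathcal{C}} g(z)\bigl[(z-H_{r,j}(b))^{-1}\bigr]_{0\alpha,0\alpha}\, dz,
\]
with $\mathcal{C}$ a contour encircling $\Msup$ at distance $\tilde d/2$ and contained in $\{d(z,\Msup)<\tilde d\}$. Restricting $b$ to the strip $|\Imag b| < \eta(\tilde d) := \min\{\eta_0, \tilde d/(4C_0)\}$ keeps the spectrum of $H_{r,j}(b)$ at distance $\geq \tilde d/4$ from $\mathcal{C}$, so the resolvent on $\mathcal{C}$ is bounded by $4/\tilde d$. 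Combined with the $O(1+E[\M])$ contour length this yields the $r$-uniform bound $|\D_\alpha[H_{r,j}(b)](g)| \leq C\tilde d^{-1} \sup_{d(z,\Msup)<\tilde d}|g(z)|$.

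Next I would pass $r\to\infty$: Theorem \ref{thm:conv} provides pointwise convergence to $\D_\alpha[\M](b,g)$ on the real axis, and the uniform bound on the strip lets Vitali's theorem promote this to a holomorphic extension of $\D_\alpha[\M](\cdot,g)$ to the same strip $|\Imag b| < \eta(\tilde d)$, inheriting the same sup-bound. The $\Gamma_{P_j}$-periodicity from Theorem \ref{thm:smooth} transfers to the complex extension by analytic continuation. Finally, the standard Paley--Wiener estimate asserts that a $\Gamma_{P_j}$-periodic function holomorphic in a complex strip of width $\eta$ with supremum $M$ has tensor-product trapezoidal-rule error at most $CM e^{-\gamma'' \eta \DiscSize}$, where $\gamma''$ depends only on the lattice constants encoded in $A_{P_j}$. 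Inserting $\eta = c\tilde d$ and $M = C\tilde d^{-1}\sup|g|$ and summing the finitely many $\alpha \in \mathcal{A}_j$ delivers the claimed estimate.

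The main obstacle is the quantitative perturbation step: establishing that the admissible width of the strip of analyticity in $b$ is \emph{linear} in $\tilde d$, uniformly in $r$, via the Schur-test Lipschitz estimate on $H_{r,j}$. Once this strip width and the accompanying resolvent-based sup-bound are in hand, the remaining ingredients (Vitali's theorem and Paley--Wiener for the torus) are standard and routine.
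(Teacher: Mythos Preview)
Your proposal is correct and follows essentially the same approach as the paper: extend $b \mapsto H_{r,j}(b)$ holomorphically to a complex strip of width $\sim \tilde d$, bound the resolvent on a contour at distance $\sim \tilde d$ by $O(\tilde d^{-1})$, and then invoke the exponential convergence of the trapezoidal rule for periodic analytic functions (the paper cites Weideman \cite{numeric_integration} for this step). Your version is in fact more explicit than the paper's terse argument---you spell out the Schur-test Lipschitz bound that justifies the strip width $c\sim\tilde d$ and use Vitali's theorem to pass from $H_{r,j}$ to the limit $\D_\alpha[\M]$, whereas the paper simply asserts ``$c\sim\tilde d$'' and ``the result follows'' after obtaining the $r$-uniform bound.
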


\begin{remark}
   In practice, $h_{\alpha\alpha’}$ has a finite cut-off and hence
   cannot be analytic. However, we can think of it as an approximation to an
   “exact” analytic $\bar{h}_{\alpha\alpha’}$. Preasymptotically, it is therefore
   useful to treat $h_{\alpha\alpha'}$ as if it were
   itself analytic.
\end{remark}

\subsection{Kernel Polynomial Method Approximation}
\label{subsec:approx}

A complete eigensolve on $H_{r,j}(b)$ for each quadrature point $b$
is computationally expensive, with scaling $O(r^{6})$.
Instead we use a Chebyshev Kernel Polynomial Method (KPM) to
compute the density of states \cite{kernel_poly}. This method scales as
$O(r^2)$, where the constant depends on the desired accuracy. It yields the
density of states operator as a smooth function from which multiple observables
can then be computed.

\begin{lemma}
\label{lemma:convolute}
Assume that $H$ satisfies Assumptions~\ref{assump:decay} and~\ref{assump:main} and that $f \in C(\mathbb{R} \times\mathbb{R}; \mathbb{C})$ and $g \in \Lambda$; then
\begin{equation*}
\int \D[\M]\big(f(\varepsilon,\cdot)\big)g(\varepsilon)d\varepsilon = \D[\M]\biggl(\int f(\varepsilon,\cdot)g(\varepsilon)d\varepsilon\biggr).
\end{equation*}

\end{lemma}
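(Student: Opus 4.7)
The plan is to reduce the claim to Fubini's theorem, exploiting the fact that, by the Riesz--Markov--Kakutani representation theorem (as noted in Remark~\ref{remark:regularity}), the density of states $\D[\M]$ extends to a finite complex Radon measure on the compact interval $\Msup$. Boundedness of $\D[\M]$ on $C(\Msup)$, established in Theorem~\ref{thm:thermal}, guarantees that the corresponding representing measure $\mu_H$ is finite, with $|\mu_H|(\Msup) \le 1$.

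First, I would fix the representing measure and write $\D[\M](\phi) = \int_{\Msup} \phi(\lambda)\, d\mu_H(\lambda)$ for every $\phi \in C(\Msup)$. Substituting this into each side of the asserted identity recasts it as
\[
\int_{\mathbb{R}} \Bigl( \int_{\Msup} f(\varepsilon,\lambda) \, d\mu_H(\lambda) \Bigr) g(\varepsilon)\, d\varepsilon
= \int_{\Msup} \Bigl( \int_{\mathbb{R}} f(\varepsilon,\lambda) g(\varepsilon)\, d\varepsilon \Bigr) d\mu_H(\lambda),
\]
which is precisely a Fubini interchange for the product measure $d\varepsilon \otimes d\mu_H$ applied to the integrand $f(\varepsilon,\lambda) g(\varepsilon)$.

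Second, I would verify the integrability hypothesis. Since $|\mu_H|(\Msup) < \infty$ and $\Msup$ is compact, it is enough to check
\[
\int_{\mathbb{R}} \sup_{\lambda \in \Msup} |f(\varepsilon,\lambda)|\, |g(\varepsilon)|\, d\varepsilon < \infty.
\]
This is tacit in the formulation: for both sides to be well-posed as ordinary Lebesgue integrals, the map $\varepsilon \mapsto f(\varepsilon,\cdot)\,g(\varepsilon)$ must be Bochner integrable as a $C(\Msup)$-valued function. Under this hypothesis, Fubini applies and yields the identity; the continuity of $\lambda \mapsto \int f(\varepsilon,\lambda) g(\varepsilon)\, d\varepsilon$ (needed so that $\D[\M]$ may be applied to it) follows from the dominated convergence theorem together with the uniform continuity of $f$ on compacts.

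The main obstacle is essentially bookkeeping around integrability of $f g$ in the unbounded variable $\varepsilon$; there is no new analytic content beyond quoting the Radon-measure structure of $\D[\M]$ together with classical Fubini. A clean way to eliminate the tacit hypothesis is to explicitly assume $\int_{\mathbb{R}} \|f(\varepsilon,\cdot)\|_{C(\Msup)}\, |g(\varepsilon)|\, d\varepsilon < \infty$, which is the natural setting in which the lemma will later be applied (with $f$ a Chebyshev kernel and $g \in \Lambda$).
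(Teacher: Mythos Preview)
Your approach is exactly the paper's: the proof there reads in full ``This result follows immediately from Remark~\ref{remark:regularity} and Fubini's Theorem,'' i.e., represent $\D[\M]$ as a Radon measure via Riesz--Markov--Kakutani and interchange the integrals. Your write-up is more careful than the paper's (and your flag about the tacit integrability hypothesis in the $\varepsilon$-variable is well taken), but there is no methodological difference.
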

\begin{proof}
This result follows immediately from Remark \ref{remark:regularity} and Fubini's Theorem.
\end{proof}

We note that $|\D[\M](g)| \leq \|g\|_\infty$, and hence
\begin{equation}
\label{e:approx}
\left|\D[\M]\biggr(\int f(\varepsilon,\cdot)g(\varepsilon)d\varepsilon\biggl) - \D[\M](g)\right| \leq \left\| \int f(\varepsilon,\cdot)g(\varepsilon)d\varepsilon - g \right\|_\infty.
\end{equation}
 Note that this bound trivially extends from $\Lambda$ to $C(\Msup)$.  Moreover, if $f(\varepsilon,e) \approx \delta (\varepsilon - e)$, then the smooth function
\begin{equation*}
   D_f(\varepsilon) := \D[\M](f(\varepsilon,\cdot)) \approx \D[\M]
\end{equation*}
in the sense of Equation \ref{e:approx}. We now choose a convenient $f$.

Recall that the Chebyshev polynomials are a basis defined recursively by
\begin{equation}
\label{e:recursion_relation}
T_0(e) = 1, \qquad
T_1(e) = e, \qquad \text{and} \qquad
T_{n+1}(e) = 2eT_n(e)-T_{n-1}(e).
\end{equation}
The polynomials are orthogonal in the sense that
\begin{equation*}
\int_{-1}^1 \frac{1}{\pi\sqrt{1-e^2}} T_n(e)T_m(e)de = \frac{1+\delta_{0n}}{2} \delta_{nm}.
\end{equation*}
An approximation to the shifted delta function $\delta(e-\varepsilon),$ at
$\varepsilon\in(-1,1),$ is given by
\begin{equation*}
\hat \chi_p(\varepsilon,e) = \frac{1}{\pi\sqrt{1-\varepsilon^2}}\sum_{m \leq p} g_m^pT_m(\varepsilon)T_m(e),\qquad e, \varepsilon \in(-1,1),
\end{equation*}
where
\begin{equation*}
g_{m}^p = (2 - \delta_{m0})\frac{(p-m+1)\cos(\frac{\pi m}{p+1})+\sin(\frac{\pi m}{p+1})\arctan(\frac{\pi}{p+1})}{p+1}
\end{equation*}
are the so-called Jackson coefficients designed to remove the Gibbs
phenomenon \cite{kernel_poly}.

To approximate the density of states on the interval $\Msup= [-E[\M],E[\M]],$ we rescale
\begin{equation*}
   \chi_p(\varepsilon,e):=\eta\hat\chi_p(\eta\varepsilon,\eta e),
   \qquad e, \varepsilon\in(-1/\eta,1/\eta),
\end{equation*}
where $\eta$ is a positive constant selected so that $E[\M]\le 1/\eta.$


We approximate $\D[\M]$ by
\begin{equation*}
D_{\chi_p}(\varepsilon) = \nu \sum_{j=1}^2 \sum_{\alpha \in \mathcal{A}_j}\int_{\Gamma_{P_j}}\D_\alpha[\M](b,\chi_p(\varepsilon,\cdot))db,
\end{equation*}
and subsequently approximate the integrand
 $\D_\alpha[\M](b,\chi_p(\varepsilon,\cdot))$ by
\begin{equation}
\label{e:ldos}
\begin{split}
\D_\alpha[H_{r,j}(b)](\chi_p(\varepsilon,\cdot)) &= [\chi_p(\varepsilon,H_{r,j}(b))]_{0\alpha,0\alpha} \\
& = \frac{\eta}{\pi \sqrt{1-(\eta\varepsilon)^2}} \sum_{m\leq p} g_m^p T_m(\eta\varepsilon) \big[\eta T_m(H_{r,j}(b))\big]_{0\alpha,0\alpha}.
\end{split}
\end{equation}
 Note that for all $\varepsilon$, the calculation requires the same $[ T_m(\eta
 H_{r,j}(b))]_{0\alpha,0\alpha}$ coefficients, which is the
 core of our {\bf Algorithm A}.

\begin{alg}
   \medskip
   \begin{minipage}{15cm}

      {\bf Algorithm A: Approximate DoS} \\

      {\bf Step 1: } Choose quadrature parameter $\DiscSize \in \mathbb{N}$ and
      domain truncation radius $r >0$. For each $j \in \{1,2\}$ and $b \in \DiscSpace_{P_j}$ construct the matrix $H_{r,j}(b)$. \\

      {\bf Step 2: } Let $e_i \in \mathbb{R}^{|\Omega_r|}$ such that $[e_i]_j = \delta_{ij}$ is the $i^{\text{th}}$ coordinate vector. Using the recursion \eqref{e:recursion_relation} we compute,
      for $\alpha \in \mathcal{A}_j$,
      \begin{equation*}
      \begin{split}
      &\hspace{6mm} v_0 = e_{0\alpha} \\
      &\hspace{6mm} v_1 = \eta H_{r,j}(b)e_{0\alpha}\\
      &\hspace{6mm} \text{store: } [T_0(\eta H_{r,j}(b))]_{0\alpha,0\alpha} = e_{0\alpha} \cdot v_0 \text{ and } [T_1(\eta H_{r,j}(b)]_{0\alpha,0\alpha}  = e_{0\alpha} \cdot v_1\\
      &\hspace{6mm} \text{for loop: } 1 \leq m \leq p-1\\
      &\hspace{12mm} v_{m+1}= 2\eta H_{r,j}(b)v_m - v_{m-1}\\
      &\hspace{12mm}\text{store: } [T_{m+1}(\eta H_{r,j}(b))]_{0\alpha,0\alpha} = e_{0\alpha} \cdot v_{m+1}\\
      \end{split}
      \end{equation*}
      This yields the coefficients $[T_m(\eta H_{r,j}(b))]_{0\alpha,0\alpha}$
      for \eqref{e:ldos}. \\

   {\bf Step 3:} Compute the expression
      \begin{equation*}
      \D_{\alpha}[H_{r,j}(b)](\chi_p( \varepsilon,\cdot)) =\frac{\eta}{\pi\sqrt{1-(\eta \varepsilon)^2}}  \sum_{m \leq p} g_m^p T_m(\eta \varepsilon)[T_m(\eta H_{r,j}(b))]_{0\alpha,0\alpha}.
      \end{equation*}
      This yields a local density of states approximation, which is interesting in its own right. \\

      {\bf Step 4:} The total density of states approximation is obtained
      by evaluating
      \begin{equation*}
      D(\varepsilon) := \frac{\nu}{\DiscSize^2} \sum_{j=1}^2\sum_{\alpha \in \mathcal{A}_j}\sum_{b \in \DiscSpace_{P_j}} |\Gamma_{P_j}|
      \cdot\D_{\alpha}[ H_{r,j}(b)](\chi_p( \varepsilon,\cdot))
      \end{equation*}
      for all desired $\varepsilon$.

   \end{minipage}
   \medskip
\end{alg}

The approximation error for the output $D(\varepsilon)$ of Algorithm A
is estimated in the following result.

\begin{thm}
\label{thm:final}
   Suppose that $\M$ satisfies Assumptions~\ref{assump:main} and~\ref{assump:decay},
   then for
   $g \in \Lambda_{\tilde d}$,
\begin{equation*}
\begin{split}
\biggl|\D[\M](g) - \int D(\varepsilon)g(\varepsilon)d\varepsilon\biggr| & \leq \underbrace{C\tilde{d}^{-2} \sup_{d(z,\Msup) < \tilde{d}}|g(z)|e^{-\gamma \tilde{d}r}}_{\text{Truncation Error}} + \\
& \underbrace{ C\tilde{d}^{-1} \sup_{d(z,\Msup) < \tilde{d}}|g(z)| e^{-\gamma'\tilde{d}\DiscSize}}_{\text{Discretization Error}}+ \underbrace{C'\left\|g - \int \chi_p(\varepsilon,\cdot)g(\varepsilon)d\varepsilon \right\|_\infty }_{\text{Kernel Polynomial Method Error}}.
\end{split}
\end{equation*}
Here $\gamma,\gamma'>0$ are independent of the choice of $\tilde{d}$.
\end{thm}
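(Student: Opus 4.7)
The plan is to insert two intermediate quantities between $\D[\M](g)$ and $\int D(\varepsilon) g(\varepsilon)\,d\varepsilon$ and apply the triangle inequality, with each of the three resulting increments corresponding exactly to one of the three error terms in the statement. Concretely, I would define
\begin{align*}
   I_1 &:= \nu \sum_{j=1}^{2}\sum_{\alpha\in\mathcal{A}_j}\int_{\Gamma_{P_j}} \D_\alpha[\M](b,g)\,db, \\
   I_2 &:= \frac{\nu}{\DiscSize^2}\sum_{j=1}^{2}\sum_{\alpha\in\mathcal{A}_j}|\Gamma_{P_j}|\sum_{b\in\DiscSpace_{P_j}} \D_\alpha[\M](b,g),\\
   I_3 &:= \frac{\nu}{\DiscSize^2}\sum_{j=1}^{2}\sum_{\alpha\in\mathcal{A}_j}|\Gamma_{P_j}|\sum_{b\in\DiscSpace_{P_j}} \D_\alpha[H_{r,j}(b)](g).
\end{align*}
By Theorem~\ref{thm:thermal}, $I_1 = \D[\M](g)$. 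On the other side, setting $g_p(e) := \int \chi_p(\varepsilon,e) g(\varepsilon)\,d\varepsilon$ and using the finite-dimensional linearity of $\D_\alpha[H_{r,j}(b)]$ together with Fubini (the same interchange that proves Lemma~\ref{lemma:convolute}, applied pointwise to each local density of states), the definition of $D(\varepsilon)$ in Algorithm~A yields $\int D(\varepsilon) g(\varepsilon)\,d\varepsilon = I_3[g_p]$, where $I_3[g_p]$ denotes $I_3$ with $g$ replaced by $g_p$.

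The triangle inequality then gives
\begin{equation*}
   \bigl|\D[\M](g) - \textstyle\int D\,g\bigr| \leq |I_1 - I_2| + |I_2 - I_3| + |I_3 - I_3[g_p]|,
\end{equation*}
and I would bound the three pieces in turn. The first piece $|I_1-I_2|$ is precisely the discretization error bounded by Theorem~\ref{thm:smooth_rate}. For the second piece, Theorem~\ref{thm:conv}(3) gives the pointwise bound $|\D_\alpha[\M](b,g) - \D_\alpha[H_{r,j}(b)](g)| \leq C\tilde d^{-2}\sup_{d(z,\Msup)<\tilde d}|g(z)|\,e^{-\gamma'\tilde d r}$; averaging over the quadrature nodes preserves this bound, and the normalization identity $\nu\sum_{j}|\mathcal{A}_j||\Gamma_{P_j}| = 1$ (immediate from the definition of $\nu$) ensures that the constant does not blow up. For the third piece, I would use that each $\D_\alpha[H_{r,j}(b)]$ is a bounded linear functional on $C(\Msup)$ with norm at most $1$ (Theorem~\ref{thm:conv}(2), or directly from the fact that $[g(H_{r,j}(b))]_{0\alpha,0\alpha}$ has operator norm bounded by $\|g\|_\infty$ on the spectrum), so
\begin{equation*}
   |I_3 - I_3[g_p]| \leq \max_{j,\alpha,b}\bigl|\D_\alpha[H_{r,j}(b)](g - g_p)\bigr| \leq \|g - g_p\|_\infty,
\end{equation*}
again using that the total quadrature weight equals $1$.

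There is no genuinely hard step here: the proof is a clean synthesis of Theorems~\ref{thm:thermal} and~\ref{thm:smooth_rate} with the elementary KPM estimate~\eqref{e:approx}, and the only subtlety is the Fubini-type interchange needed to identify $\int D(\varepsilon)g(\varepsilon)\,d\varepsilon$ with $I_3[g_p]$. That interchange is harmless because $\D_\alpha[H_{r,j}(b)]$ is evaluation of a single matrix entry of $g_p(H_{r,j}(b))$ on a finite-dimensional space, so continuity and boundedness of the functional on $C(\Msup)$ justify moving the $\varepsilon$-integral inside. The final step is simply to combine the three bounds, absorbing $\nu$-dependent and cardinality-dependent prefactors into the constants $C, C'$ of the statement.
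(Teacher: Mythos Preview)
Your proposal is correct and follows essentially the same three-term decomposition that the paper's one-line proof cites (Theorem~\ref{thm:conv} for truncation, Theorem~\ref{thm:smooth_rate} for discretization, and the argument of~\eqref{e:approx} for the KPM piece). The only cosmetic difference is that the paper points to~\eqref{e:approx}, which is stated for $\D[\M]$, whereas you apply the identical norm-$1$ bound directly to each finite-dimensional $\D_\alpha[H_{r,j}(b)]$; your version is in fact the one that makes the stated $g$-dependence (rather than $g_p$-dependence) of the truncation and discretization errors transparent.
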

\begin{proof}
The Truncation Error follows from Theorem \ref{thm:conv}, the Discretization Error from Theorem \ref{thm:smooth_rate}, and the Kernel Polynomial Error from  \eqref{e:approx}.
\end{proof}

\begin{remark}
   If we do not assume that $h_{\alpha\alpha'}$ is analytic and use $h_{\alpha\alpha'} \in C_0^n(\mathbb{R}^2)$ instead, the Truncation Error above is replaced with the standard periodic discretization error \cite[Theorem 1]{numeric_integration}, but the bound does not give the dependence of $\DiscSize$ on $\tilde{d}$.
\end{remark}

\subsection{Convergence Rates}
We briefly discuss a heuristic to choose the approximation
parameters  $p, \DiscSize \in \mathbb{N}$ and $r >0$. In
practice, one is interested in calculating the density of
states at a point or in calculating an observable
$\D[H](g)$ for $g \in \Lambda_{\tilde d}$.

For the first case, we note that $\chi_p$ acts similar to an approximation to
the identity of width proportional to $p^{-1}$ \cite{kernel_poly} with well
preserved regularity because of the Jackson coefficients. For analytic purposes,
we can consider $\chi_p(\varepsilon,e) \sim p^{-1} \phi((\varepsilon-e)/p)$ for
some analytic function $\phi$, $|\phi(x)| < e^{-c|x|}$ for $x \in \Msup$ and for
some $c >0$. An approximation of the density of states at a given energy point
$\varepsilon$ is given by $\D[H](\chi_p(\varepsilon,\cdot)) \sim
\D[H](p^{-1}\phi((\cdot - \varepsilon)/p)$. To approximate
$\D[H](\chi_p(\varepsilon,\cdot))$, we use Theorem \ref{thm:final} letting
$\tilde d \sim p^{-1}$ to see that the errors will be balanced if
\begin{equation}
   \label{e:Rrate}
   r \sim \DiscSize \sim p\log(p)
\end{equation}
Suppose the density of states is a function, i.e.,
\begin{equation*}
   \D[H](g) = \int \DoS(\epsilon) g(\epsilon)d\epsilon,
\end{equation*}
where $\DoS$ has Lipschitz constant $M$. Then we can estimate
\begin{equation*}
   |\DoS(\varepsilon) - \D[H](\chi_p(\varepsilon,\cdot)| \leq Mp^{-1}.
\end{equation*}
to obtain
\begin{equation*}
|D(\varepsilon) - \DoS(\varepsilon)| \leq C'(p e^{-\gamma' \frac{\DiscSize}{p}} + p^2 e^{-\gamma \frac{r}{p}} + Mp^{-1}).
\end{equation*}
If the constants in \eqref{e:Rrate} are chosen sufficiently small, we have
\begin{equation}
\label{e:pointwise_conv}
|D(\varepsilon) - \DoS(\varepsilon)| \leq (M+ C)p^{-1},
\end{equation}
where $C > 0$ is independent of smoothness properties of $\DoS$.

If the $\DoS$ is $C^2$ at a point $\varepsilon$ of interest, then we
may even expect
\begin{equation}
\label{e:pointwise_conv2}
|D(\varepsilon) - \DoS(\varepsilon)| \leq Cp^{-2},
\end{equation}
due to the fact that $\int xe^{-a x^2} dx = 0$ for any $a >0$.

For the second case, when the observable $g \in \Lambda$ is fixed
(no polynomial degree approximation parameter $p$), we have {\em in principle}
exponential decay of the error in $r$ and $\DiscSize$.
This seems to imply that it would be optimal to calculate the
observable directly using an eigensolve, thus avoiding the slower decay in
$p$. However the decay rate in $r$ is strongly coupled to the value of
$\tilde{d}$ from Theorem \ref{thm:conv}, which is fairly
small for interesting observables. Therefore, the involved matrices are
typically quite large, rendering direct eigensolves impractical.

\subsection{Numerical Results}
\label{subsec:num_results}
We test our approximation scheme using a tight-binding model for twisted bilayer graphene \cite{fangTB} with a relative twist angle of $6^\circ$. We fix an $\alpha \in \mathcal{A}_1$ and then verify numerically the following two results:

\begin{enumerate}
\item As predicted in Theorem \ref{thm:conv}, $\D_\alpha[H_{r,1}(b)](\chi_p(\varepsilon,\cdot)) \rightarrow \D_\alpha[\M](b,\chi_p(\varepsilon,\cdot))$ as $r \to \infty$
      with exponential rate: see Figure \ref{fig:r_conv}.
\item As predicted by Theorem \ref{thm:final} and \eqref{e:pointwise_conv2},
   $D \rightarrow \mathrm{DoS}$ pointwise as $p, r, \DiscSize \to \infty$,
   with quadratic rate: see Figure \ref{fig:l1_conv}.
\end{enumerate}

\begin{figure}[!htb]
\centering
\includegraphics[scale=.5]{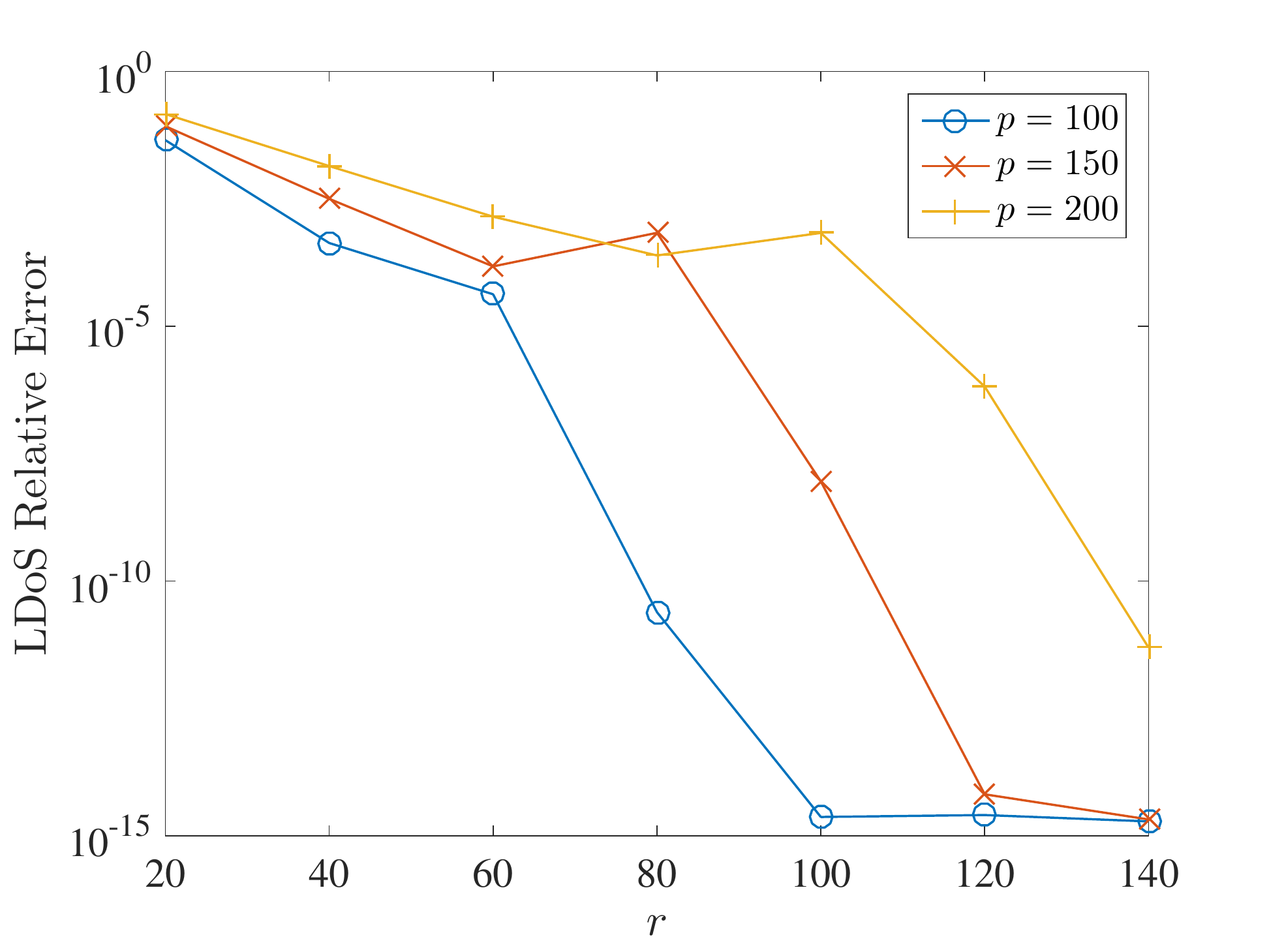}
\caption{ Relative error of $\D_\alpha[H_{r,1}(0)](\chi_{p}(0,\cdot))$ converging to $\D_\alpha[H](\chi_p(0,\cdot))$, for increasing values of $p$.}
\label{fig:r_conv}
\end{figure}

\begin{figure}[!htb]
\centering
\includegraphics[scale=.5]{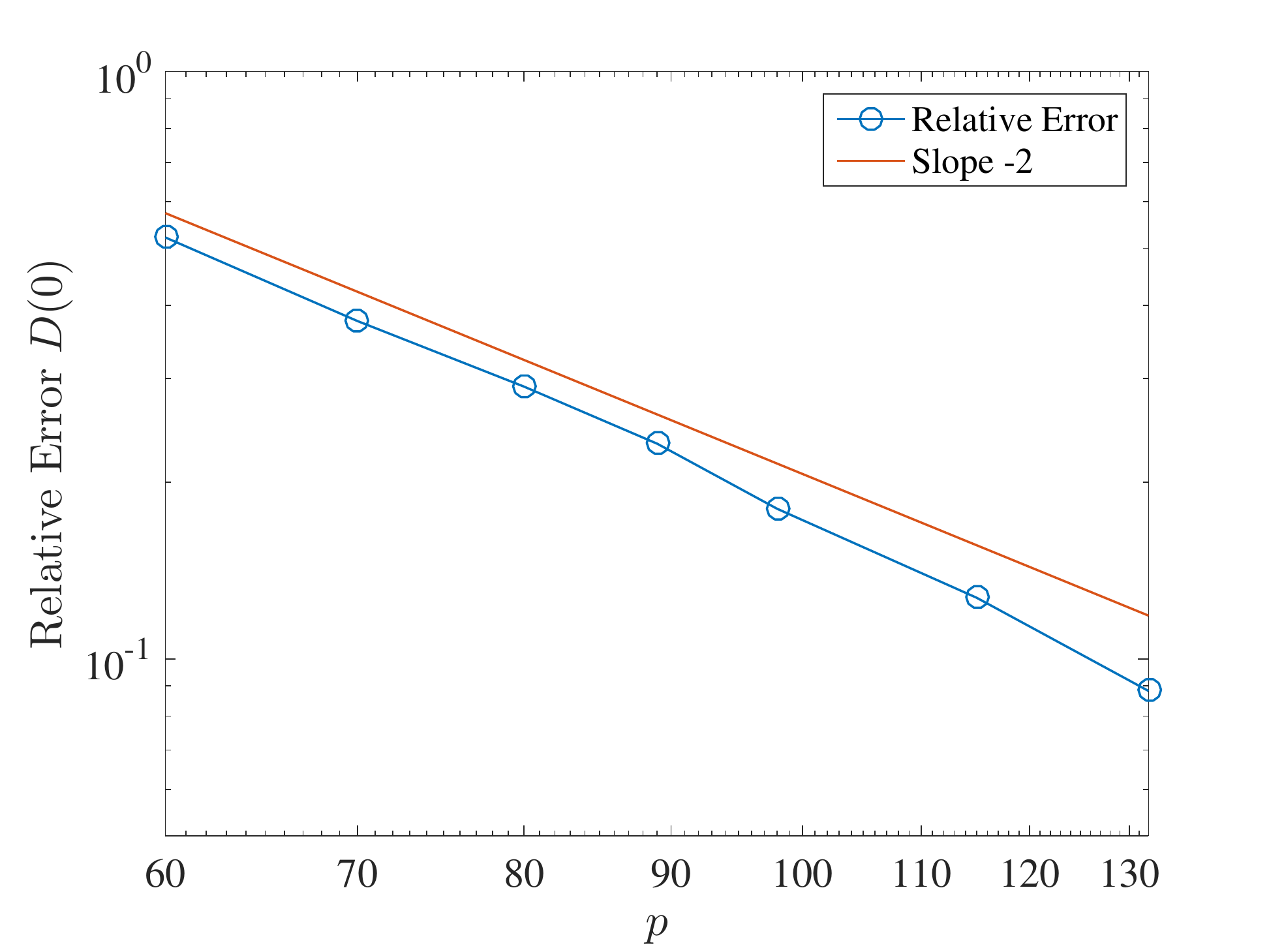}
\caption{ Relative error of $D(0) \rightarrow \text{DoS}(0)$ pointwise, where $r$ and $\DiscSize$ scale as in \eqref{e:Rrate}. The slope is $-1.98 \approx -2$, as predicted in \eqref{e:pointwise_conv2}.}
\label{fig:l1_conv}
\end{figure}

Furthermore, we demonstrate the practicality of Algorithm A by reproducing
twisted bilayer effects in the density of states of two stacked graphene sheets
with a relative twist of $6^\circ$ as predicted in \cite{fangTB} (See Figure
\ref{fig:dos}). We included the DoS for monolayer graphene for comparison.
The conical region near the $-.6$ energy region is called the Dirac cone. When
the two layers interact, the curve splits near the cone tip (the Dirac point)
forming two Van Hove Singularities on either side of the tip.
In practice the VHS needs higher resolution. We will explore how to achieve high resolutions in a future work.

\begin{figure}[!htb]
\centering
\includegraphics[scale=.5]{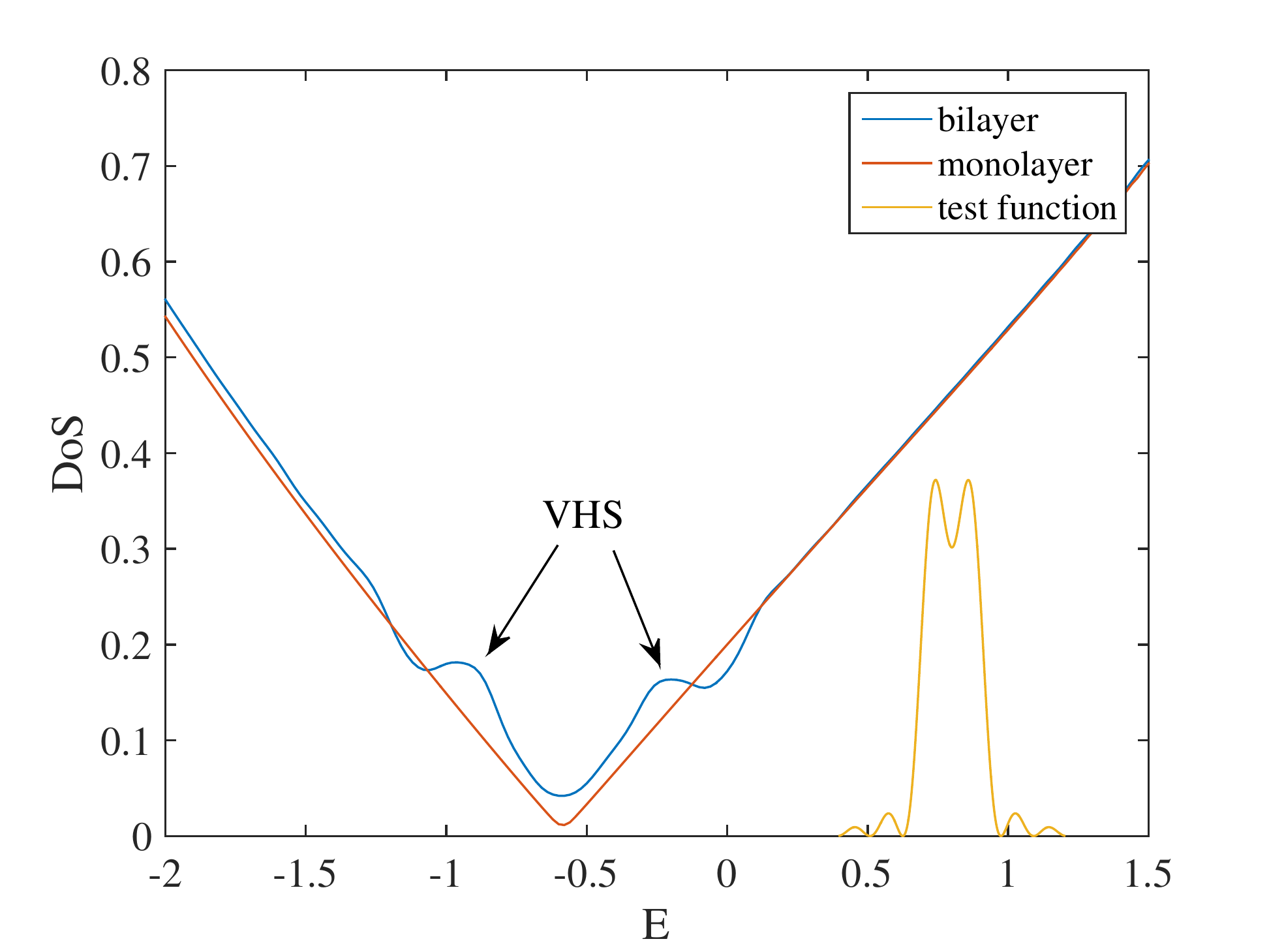}
\caption{ Approximation of the DoS with $r = 180$, $p = 700$, and $\DiscSize = 4$. We can see Van Hove Singularities (VHS) forming near the Dirac Point, agreeing with theoretical results \cite{fangTB}. We include the test function, which is to scale in the E-axis, but not in the DoS-axis.}
\label{fig:dos}
\end{figure}

\section{Proofs}
\label{sec:thm}
To attain bounds on the density of states objects, we will use resolvent bounds
as introduced in \cite{ChenOrtnerTB}. We denote $\C$ a contour around $\Msup$,
which contains the spectrum.  We can write for $\tilde \Omega \subset \Omega$
finite, $\tilde H \in M_{|\tilde \Omega|}(\mathbb{C})$, $k \in \Omega$, and $g$
analytic
\begin{equation*}
[g(H_{\tilde \Omega})]_{kk} =\frac{1}{2\pi i} \oint_\C g(z)[(z-\tilde H)^{-1}]_{kk}dz.
\end{equation*}
We will then rely on decay estimates for $[(z-\tilde H)^{-1}]_{kk}$ as $\tilde \Omega \uparrow \Omega$. We will vary our choice of $\C$ to tune the error bounds.

\subsection{Proof of Theorem \ref{thm:equidistribution} }
Although this result is conceptually close to the equidistribution theorem
\cite{Zorzi2015}, our specific statement of the result seems to be unavailable.
Hence we prefer to give a complete proof. Without loss of generality, we let $j =
1$ and hence $P_j = 2$. Then we wish to show for $g \in
C_{\text{per}}(\Gamma_{2})$, we have
\begin{equation*}
   \frac{1}{\# \R_1 \cap B_r} \sum_{\ell \in \R_1 \cap B_r} g(\ell) \rightarrow \frac{1}{|\Gamma_{2}|}\int_{\Gamma_{2}} g(b)db.
\end{equation*}
Upon transforming coordinates we may assume without loss of generality that $A_1 =
\text{Id}$. Hence for some matrix $A$ dependent on the original coordinates and
$V_r = |A B_r|$ we get
\begin{equation}
\label{e:ergodic}
   \frac{1}{V_r} \sum_{n \in \mathbb{Z}^2 \bigcap AB_r} g(n) \rightarrow \frac{1}{|\Gamma_2|}\int_{\Gamma_2} g(x)dx.
\end{equation}

Since $C_{\text{per}}^\infty(\Gamma_2)$ is dense in $C_{\text{per}}(\Gamma_2)$,
we assume $g \in C_{\text{per}}^\infty(\Gamma_2)$. On expanding $g$ into Fourier
modes, it suffices to show \eqref{e:ergodic} for an arbitrary fourier mode $g(x) =
e^{2\pi i m\cdot A_2^{-1}x}$ where $m \in \mathbb{Z}^2$.

If $m = (0, 0)$, then the left-hand side of \eqref{e:ergodic} converges
to $1$, which is the value of the right-hand side.

For $m \neq (0, 0)$, the left-hand side of \eqref{e:ergodic} vanishes, so
we need to prove that $\frac{1}{V_r} \sum_{n \in \mathbb{Z}^2 \bigcap AB_r} f(n) \to 0$
as $r \to \infty$. We first rewrite
\begin{equation*}
   \frac{1}{V_r} \sum_{n \in \mathbb{Z}^2 \bigcap AB_r} f(n)
   = \frac{1}{V_r} \sum_{n \in \mathbb{Z}^2 \bigcap AB_r} e^{2\pi i m^t A_2^{-1}n}
   = \frac{1}{V_r} \sum_{n \in \mathbb{Z}^2 \bigcap AB_r} e^{2\pi i a \cdot n},
\end{equation*}
where  $(a_1, a_2) = m^t A_2^{-1}$. If both $a_1$ and $a_2$ were rational, then
this would contradict Assumption \ref{assump:main}. Hence we assume, without
loss of generality, that $a_2 \notin \mathbb{Q}$.

Let $c > 0$ such that
\begin{equation*}
   n \in \mathbb{Z}^2 \bigcap AB_r \quad \Rightarrow  \quad
      n_1 \in [-cr, cr].
\end{equation*}
Moreover, for $n_1 \in [-cr, cr] \cap \Z$ let $f_1(n_1), f_2(n_2) \in \Z$ such that
$(n_1, n_2) \in \mathbb{Z}^2 \bigcap AB_r$ if and only if $f_1(n_1) \leq n_2 \leq f_2(n_2)$.

We can now compute
\begin{align*}
   \frac{1}{V_r} \sum_{n \in \mathbb{Z}^2 \bigcap AB_r} e^{2\pi i a \cdot n}
   &=
   \frac{1}{V_r} \sum_{n_1 \in [-cr, cr] \cap \Z} e^{2\pi i a_1 n_1}
      \sum_{n_2 = f_1(n_1)}^{f_2(n_2)} e^{2 \pi i a_2 n_2} \\
   &=
   \frac{1}{V_r} \sum_{n_1 \in [-cr, cr] \cap \Z} e^{2\pi i a_1 n_1}
      \, \frac{e^{2\pi i a_2 (f_1(n_1)+1)} - e^{2\pi i a_2 (f_2(n_1)+1)}}{1 - e^{2\pi i a_2}}.
\end{align*}
Since $a_2$ is irrational, $1-e^{2\pi i a_2} \neq 0$, hence we can estimate
\begin{equation*}
   \bigg| \frac{1}{V_r} \sum_{n \in \mathbb{Z}^2 \bigcap AB_r} e^{2\pi i a \cdot n} \bigg|
   \leq \frac{4 c r}{|1-e^{2\pi i a_2}| V_r} \leq C r^{-1},
\end{equation*}
which vanished in the limit $r \to \infty$, as required. This completes the proof
of Theorem \ref{eq:equidistribution}.


\subsection{Proof of Theorem \ref{thm:conv}}

Recall that
\begin{equation*}
\Lambda := \{ g \in C(\mathbb{R}) ~|~ g \text{ is analytic on }\Msup \}.
\end{equation*}
In particular, note that $\Lambda$ is dense in $C(\Msup)$, in the sense that for any $f \in C(\Msup)$ and $\epsilon > 0$, there exists $g \in \Lambda$ such that
\begin{equation*}
\| g|_{\Msup} - f\|_\infty < \epsilon.
\end{equation*}
This will be useful for extending the density of states operators from $\Lambda$ to $C(\Msup)$.

\begin{lemma}
\label{lemma:resolvent}
Suppose $\Mat \in M_n(\mathbb{C})$, and $y : \{1,2,\cdots, n\} \rightarrow \mathbb{R}^2$ such that
\begin{equation*}
|\Mat_{k\ell}| \leq C e^{-\tilde{\gamma} |y(k)-y(\ell)|}
\end{equation*}
for some $\tilde{\gamma} > 0$. Let $N \in \mathbb{N}$, $r' > 0$ and suppose that
for all $x \in \mathbb{R}^2$ $|\#\{ y(j) : y(j) \in B_{r'}(x)\}| < N$.  Then
there exists $\gamma >0$ such that, for all $z \in \mathbb{C}$, ${\rm dist}(z,\Msup) \geq \tilde{d}$,
\begin{equation*}
\Big| [(z-\Mat)^{-1}]_{k\ell} \Big| \leq C'\tilde{d}^{-1} e^{-\gamma \tilde{d} |y(k)-y(\ell)|}
\end{equation*}
Here $C'$ and $\gamma$ are dependent on $\tilde{\gamma}, N,r',$ and $C$.

\end{lemma}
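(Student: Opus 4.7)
The plan is to use a Combes--Thomas type argument: conjugate $z-\tilde{H}$ by an exponential diagonal weight, show the conjugated operator differs from $z-\tilde{H}$ by a perturbation whose norm can be made arbitrarily small by shrinking the weight parameter, and then invert by Neumann series. Fix indices $k,\ell$ and choose the unit vector $\hat{v} = (y(k)-y(\ell))/|y(k)-y(\ell)|$. For a parameter $s>0$ to be chosen later, define the diagonal matrix $W = \operatorname{diag}\bigl(e^{s\,\hat{v}\cdot y(j)}\bigr)_j$. Then $W\tilde{H}W^{-1}$ has entries $e^{s\,\hat{v}\cdot(y(i)-y(j))}\tilde{H}_{ij}$, so I write $W\tilde{H}W^{-1} = \tilde{H} + E_s$ where
\begin{equation*}
|(E_s)_{ij}| \leq C\bigl|e^{s\,\hat{v}\cdot(y(i)-y(j))}-1\bigr|\, e^{-\tilde{\gamma}|y(i)-y(j)|}.
\end{equation*}

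The key technical step is a Schur-test bound on $\|E_s\|$. Using $|\hat{v}\cdot(y(i)-y(j))|\leq |y(i)-y(j)|$ and $|e^{x}-1|\leq |x|e^{|x|}$, the entries of $E_s$ are dominated by $Cs|y(i)-y(j)|e^{-(\tilde{\gamma}-s)|y(i)-y(j)|}$. The local finiteness assumption (at most $N$ points $y(j)$ in any ball of radius $r'$) bounds the number of indices $j$ with $|y(i)-y(j)|\in[R,R+r']$ by a constant $N'(R/r'+1)$ in two dimensions, so summing radially gives $\sup_i \sum_j |(E_s)_{ij}| \leq K s$ for all $s\leq \tilde{\gamma}/2$, where $K$ depends only on $\tilde{\gamma}, N, r', C$. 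The same bound applies to $\sup_j \sum_i |(E_s)_{ij}|$ by symmetric reasoning, so Schur's test yields $\|E_s\|\leq Ks$.

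Now choose $s = c\tilde{d}$ where $c = \min(\tilde{\gamma}/2, 1/(2K))$, so that $\|E_s\|\leq \tilde{d}/2$. Since $\|(z-\tilde{H})^{-1}\|\leq \tilde{d}^{-1}$ by the spectral theorem (here I use that $\tilde{H}$ is Hermitian with spectrum in $S[H]$), a Neumann series argument gives invertibility of $z - \tilde{H} - E_s = W(z-\tilde{H})W^{-1}$ with operator norm bounded by $2\tilde{d}^{-1}$. Extracting the $(k,\ell)$ entry,
\begin{equation*}
\bigl|[(z-\tilde{H})^{-1}]_{k\ell}\bigr| = \bigl|[W^{-1}(z - W\tilde{H}W^{-1})^{-1}W]_{k\ell}\bigr| \leq e^{s\,\hat{v}\cdot(y(\ell)-y(k))}\cdot 2\tilde{d}^{-1},
\end{equation*}
and the choice of $\hat{v}$ turns the exponent into $-s|y(k)-y(\ell)| = -c\tilde{d}|y(k)-y(\ell)|$, giving the claim with $\gamma = c$ and $C' = 2$.

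The main obstacle I anticipate is getting the Schur estimate $\|E_s\|\leq Ks$ with constants depending only on $\tilde{\gamma}, N, r', C$ and not on the dimension of $\tilde{H}$; this requires careful use of the uniform local finiteness assumption (rather than global counts) so that the radial sum $\sum_R (R+1) e^{-(\tilde{\gamma}-s)R}$ has a bound independent of the system size. Everything else is routine once this perturbation bound is in hand.
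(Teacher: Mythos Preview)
The paper does not prove this lemma at all; it simply cites ``Lemma 2.2 of \cite{ChenOrtnerTB}''. Your self-contained Combes--Thomas argument (conjugate by a diagonal exponential weight, bound the perturbation via a Schur test using the uniform local-finiteness hypothesis, invert by Neumann series) is exactly the standard proof of such off-diagonal resolvent decay, and is almost certainly what the cited reference does. So you are supplying the details the paper defers.

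One small slip worth flagging: your choice $c=\min(\tilde\gamma/2,\,1/(2K))$ does not actually ensure $s=c\tilde d\le\tilde\gamma/2$, which you need for the Schur bound $\|E_s\|\le Ks$ to be valid; as written this only works when $\tilde d\le 1$. The clean fix is to set $c=1/(2K)$ and restrict to $\tilde d\le K\tilde\gamma$. In fact the lemma as stated cannot hold for arbitrarily large $\tilde d$ with $\gamma$ independent of $\tilde d$: already for a $2\times 2$ Hermitian matrix with off-diagonal entry $a=Ce^{-\tilde\gamma d}$ one has $[(z-\tilde H)^{-1}]_{12}\sim a/z^2$, which for fixed $d>0$ decays only polynomially in $\tilde d$, not like $e^{-\gamma\tilde d\, d}$. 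This is harmless for the paper, since every application (Lemma~\ref{lemma:resolvent2}, the proofs of Theorems~\ref{thm:conv} and~\ref{thm:smooth_rate}) uses contours with $\tilde d$ bounded by the width of the analyticity strip of the test function, so the small-$\tilde d$ regime is the only one that matters.
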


\begin{proof}
   This is a version of Lemma 2.2 from \cite{ChenOrtnerTB}.
\end{proof}

In particular, the previous lemma applies to the matrices $H_{r,j}(b)$. To apply
it we will set $y = \Rinv$ where in the following we define
\begin{equation*}
   \Rinv : \Omega \rightarrow \mathbb{R}^2, \qquad
   \Rinv( R\alpha) = R.
\end{equation*}
For the next lemma, recall the definition of $H_{r',j}(b)$ from
\eqref{eq:H_rj_b}.

\begin{lemma} \label{lemma:resolvent2}
   Suppose that $H$ satisfies Assumptions~\ref{assump:decay} and~\ref{assump:main}.
   Let $\tilde{\Omega} \subset \Omega$ be a set of indices and
    $\tilde H_j(b)$ be the matrix defined over $\tilde \Omega$ with shift $b$ relative to sheet $j$, that is,
   \begin{equation*}
   [\tilde H_j(b)]_{R\alpha,R'\alpha'} = h_{\alpha\alpha'}\bigl(b(\delta_{\alpha \in \mathcal{A}_{P_j}} - \delta_{\alpha' \in \mathcal{A}_{P_j}})+R-R'\bigr).
   \end{equation*}

   Suppose that $r' > 0$ such that $\Omega_{r'} \subset \tilde \Omega$
   and $\tilde{d} > 0$ such that $d(z,S[H]) > \tilde{d}$, then
   \begin{equation*}
      \begin{split}
      &\biggl|\big[(z-\tilde H_j(b))^{-1}\big]_{k\ell}
         - \big[(z- H_{r',j}(b))^{-1} \big]_{k\ell}\biggr|\\
       &\qquad\qquad\leq C \tilde{d}^{-2} \min\big\{e^{- \gamma \tilde{d} |\Rinv(k)-\Rinv(\ell)|} , r'e^{-\gamma \tilde{d} \min\{r'-|\Rinv(k)|,r'-|\Rinv(\ell)|\}} \big\},
      \end{split}
   \end{equation*}
   where $C$ and $\gamma$ are independent of $\tilde{\Omega}$ and $r'$ (See Figure \ref{fig:resolvent2}).
\end{lemma}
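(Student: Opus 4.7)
The plan is to apply the resolvent identity after decomposing $\tilde H_j(b)$ into an uncoupled block-diagonal piece plus a boundary coupling. Set $P := \Omega_{r'}$ and $Q := \tilde\Omega \setminus \Omega_{r'}$, and write $\tilde H_j(b) = \bar H + E$, where $\bar H$ is block diagonal with $P\times P$ block $H_{r',j}(b)$ and $Q\times Q$ block $W := \tilde H_j(b)|_{Q\times Q}$, while $E$ collects the off-diagonal hoppings between $P$ and $Q$. By Assumption~\ref{assump:decay} (and the boundedness of $b \in \Gamma_{P_j}$), one has $|E_{mn}| \leq C e^{-\tilde\gamma|\Rinv(m)-\Rinv(n)|}$, with $E_{mn}=0$ unless exactly one of $m,n$ lies in $Q$. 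Since $\tilde H_j(b)$, $H_{r',j}(b)$ and $W$ are principal submatrices of some $H_{r,j}(b)$ for $r$ large enough, interlacing places their spectra inside $\Msup$, so $(z-\tilde H_j(b))^{-1}$, $(z-H_{r',j}(b))^{-1}$ and $(z-\bar H)^{-1}$ are all well defined for $d(z,\Msup)\geq\tilde d$.

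The resolvent identity then gives
\begin{equation*}
(z-\tilde H_j(b))^{-1} - (z-\bar H)^{-1} = (z-\tilde H_j(b))^{-1}\, E\, (z-\bar H)^{-1}.
\end{equation*}
For $k,\ell\in P$, the block-diagonal structure of $\bar H$ makes $[(z-\bar H)^{-1}]_{k\ell} = [(z-H_{r',j}(b))^{-1}]_{k\ell}$. Reading off the $(k,\ell)$-entry and using that both $E_{mn}$ and the $(n,\ell)$ block of $(z-\bar H)^{-1}$ restrict the summation indices, one obtains
\begin{equation*}
[(z-\tilde H_j(b))^{-1}]_{k\ell} - [(z-H_{r',j}(b))^{-1}]_{k\ell}
= \sum_{m \in Q,\, n \in P} [(z-\tilde H_j(b))^{-1}]_{km}\, E_{mn}\, [(z-H_{r',j}(b))^{-1}]_{n\ell}.
\end{equation*}
Lemma~\ref{lemma:resolvent}, applied with $y=\Rinv$ to each of the two resolvents, yields $C\tilde d^{-1} e^{-\gamma\tilde d|\Rinv(\cdot)-\Rinv(\cdot)|}$ pointwise bounds, which accounts for the overall $\tilde d^{-2}$ prefactor in the target estimate.

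The first term of the minimum comes from the triangle inequality $|\Rinv(k)-\Rinv(m)| + |\Rinv(m)-\Rinv(n)| + |\Rinv(n)-\Rinv(\ell)| \geq |\Rinv(k)-\Rinv(\ell)|$: splitting $\gamma = \gamma' + (\gamma-\gamma')$ pulls out the factor $e^{-\gamma'\tilde d|\Rinv(k)-\Rinv(\ell)|}$, and the leftover exponents combine with $e^{-\tilde\gamma|\Rinv(m)-\Rinv(n)|}$ (which dominates $e^{-\gamma\tilde d|\Rinv(m)-\Rinv(n)|}$ for $\tilde d$ small) to yield a convergent two-dimensional lattice sum bounded independently of $r'$ and $\tilde\Omega$. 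The second term exploits the geometric constraint $m \in Q$, i.e.\ $|\Rinv(m)| > r' \geq |\Rinv(k)|$, which forces $|\Rinv(k)-\Rinv(m)| \geq r' - |\Rinv(k)|$; extracting $e^{-\gamma\tilde d(r'-|\Rinv(k)|)}$ from the first resolvent leaves a sum in which the exponential short-range of $E_{mn}$ confines the admissible $(m,n)$ pairs to an $O(1)$-thick annulus around $\partial B_{r'}$, contributing the perimeter factor $O(r')$ rather than an area factor. Running the resolvent identity in its symmetric form $(z-\bar H)^{-1} E (z-\tilde H_j(b))^{-1}$ produces the analogous bound with $r'-|\Rinv(\ell)|$ in the exponent, and minimizing over the two versions yields the stated $\min\{\cdot,\cdot\}$.

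The main technical obstacle is the perimeter-versus-area accounting in the second bound: one must verify carefully that the combination of exponential decay in $|\Rinv(m)-\Rinv(n)|$ and the separation $|\Rinv(m)|>r'\geq|\Rinv(n)|$ restricts the $(m,n)$-sum to an $O(1)$-thick annular boundary layer of length $\sim r'$, so that the correct prefactor is $r'$ and not $r'^{2}$. Secondary care is required to make the residual $\tilde d$-dependent two-dimensional lattice sums absorb cleanly into the constants $C$ and $\gamma$ (at the possible cost of a slightly smaller exponential rate in the final bound), so that $C$ and $\gamma$ remain independent of $r'$ and $\tilde\Omega$ as claimed.
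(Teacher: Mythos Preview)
Your argument is correct and follows the same overall resolvent-perturbation strategy as the paper, but with two tactical differences worth noting. First, your reference operator $\bar H$ is block-diagonal with the \emph{original} $Q\times Q$ block $W$ retained, whereas the paper zeroes out the $Q\times Q$ block entirely (so its perturbation $\tilde H_j(b)-\tilde H_j^{r'}(b)$ picks up the full $Q\times Q$ part in addition to the off-diagonal coupling). Your choice makes the perturbation $E$ purely off-diagonal, which forces the double sum to run only over $m\in Q$, $n\in P$ and cleanly exhibits the boundary-layer structure; the paper's choice produces a larger sum (over all pairs with at least one index outside $\Omega_{r'}$) handled via a symmetrized estimate. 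Second, you invoke the single-step resolvent identity directly, while the paper introduces a homotopy $B(\lambda)=z-\tilde H_j^{r'}(b)-\lambda\bigl(\tilde H_j(b)-\tilde H_j^{r'}(b)\bigr)$ and integrates $f'(\lambda)$ over $[0,1]$; since Lemma~\ref{lemma:resolvent} gives a bound uniform in $\lambda$, the integral collapses to your one-line identity. Your route is arguably more transparent---in particular, the interlacing argument you give for $\sigma(\bar H)\subset\Msup$ is a clean justification that the paper leaves implicit for the intermediate $B(\lambda)$---but both lead to the same endpoint estimate via the same Combes--Thomas input from Lemma~\ref{lemma:resolvent}.
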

\begin{proof}
   We define the matrix $\tilde H_{j}^{r'}(b) \in
   M_{|\tilde \Omega|}(\mathbb{C})$ such that
\[ [\tilde H_{j}^{r'}(b)]_{k\ell} =
  \begin{cases}
    H_{r',j}(b)       & \quad \text{if } k,\ell \in \Omega_{r'}\\
   0			& \quad \text{otherwise}\\
  \end{cases}.
\]

We write $\tilde H_{j}(b) =   \tilde H_{j}^{r'}(b) + \bigl(\tilde H_{j}(b)-\tilde H_{j}^{r'}(b)\bigr)$, and
\begin{equation*}
[(z-\tilde H_{j}(b))^{-1}]_{k\ell} =[ \bigl(z-\tilde H_{j}^{r'}(b) - (\tilde H_{j}(b) -\tilde H_{j}^{r'}(b))\bigr)^{-1}]_{k\ell}.
\end{equation*}
Thus, after defining
\begin{equation*}
B(\lambda) = z - \tilde H_{j}^{r'}(b) - \tilde \lambda(\tilde H_{j}(b)-\tilde H_{j}^{r'}(b)),
\end{equation*}
and
\begin{equation*}
f(\lambda) = [ B(\lambda)^{-1}]_{k\ell}
\end{equation*}
we need to estimate $f(1) - f(0)$. Differentiating with respect to $\lambda$
yields
\begin{equation*}
\begin{split}
f'(\lambda) &=[ B(\lambda)^{-1} (\tilde H_{j}(b)-\tilde H_{j}^{r'}(b))B(\lambda)^{-1}]_{k\ell}\\
& = \sum_{t,s \in\tilde  \Omega}   [B(\lambda)^{-1}]_{kt} [(\tilde H_{j}(b)-\tilde H_{j}^{r'}(b))]_{t s}[B(\lambda)^{-1}]_{s\ell}.
\end{split}
\end{equation*}
Now $[\tilde H_{j}(b) - \tilde H_{j}^{r'}(b))]_{t s}$ is only nonzero if $t$ or $s \notin \Omega_{r'}$. We use the definition
\begin{equation*}
\tilde \Omega\setminus \Omega_{r'} := \{ x : x \in \tilde \Omega, x \notin \Omega_{r'}\}.
\end{equation*}
From Lemma \ref{lemma:resolvent}, we have
\begin{equation*}
|\tilde{H}(\lambda)^{-1}|_{st} \leq C\tilde{d}^{-1} e^{-\gamma\tilde{d}|\mathfrak{R}(s)-\mathfrak{R}(t)|}.
\end{equation*}
Therefore, we obtain the bound
\begin{equation*}
\begin{split}
|f'(\lambda)| & \leq  \sum_{t \in\tilde \Omega}\sum_{s \in \tilde \Omega\setminus\Omega_{r'}}\bigl| [B(\lambda)^{-1}]_{kt}[\tilde H_{j}(b)-\tilde H_{j}^{r'}(b)]_{t s}[B(\lambda)^{-1}]_{s\ell}\bigl| \\
& \qquad \qquad +\sum_{s\in\tilde \Omega}\sum_{t \in\tilde \Omega\setminus\Omega_{r'}}\bigl| [B(\lambda)^{-1}]_{kt}[\tilde H_{j}(b)-\tilde H_{j}^{r'}(b)]_{t s}[B(\lambda)^{-1}]_{s\ell}\bigl| \\
& \leq C\tilde{d}^{-2}\sum_{t \in \tilde \Omega}\sum_{s \in \tilde \Omega\setminus\Omega_{r'}} e^{-\gamma\tilde{d}( |\Rinv(k)-\Rinv(t)| + |\Rinv(t) - \Rinv(s)| + |\Rinv(s)-\Rinv(\ell)|)}  \\
& \qquad \qquad +C\tilde{d}^{-2}\sum_{s \in\tilde  \Omega}\sum_{t \in \tilde \Omega\setminus\Omega_{r'}} e^{-\gamma\tilde{d}( |\Rinv(k)-\Rinv(t)| + |\Rinv(t) - \Rinv(s)| + |\Rinv(s)-\Rinv(\ell)|)}  \\
& \leq C'\tilde{d}^{-2} \min\{e^{- \gamma \tilde{d} |\Rinv(k)-\Rinv(\ell)|} , r'e^{-\gamma \tilde{d} \min\{r'-|\Rinv(k)|,r'-|\Rinv(\ell)|\}} \}.
\end{split}
\end{equation*}
Hence, we conclude that
\begin{equation*}
\begin{split}
\bigl|[(z- \tilde H_{j}(b))^{-1}]_{k\ell} - [(z-H_{r',j}(b))^{-1}]_{k\ell}\bigr| &\leq |f(1)-f(0)| \leq \int_0^1 |f'(\lambda)|d\lambda \\
& \hspace{-4cm}\leq C'\tilde{d}^{-2} \min\{e^{- \gamma \tilde{d} |\Rinv(k)-\Rinv(\ell)|} , r'e^{-\gamma \tilde{d} \min\{|r'-|\Rinv(k)|,r'-|\Rinv(\ell)|\}} \}. \qedhere
\end{split}
\end{equation*}
\end{proof}

\begin{figure}[ht]
\centering
   \includegraphics[width=0.6\textwidth]{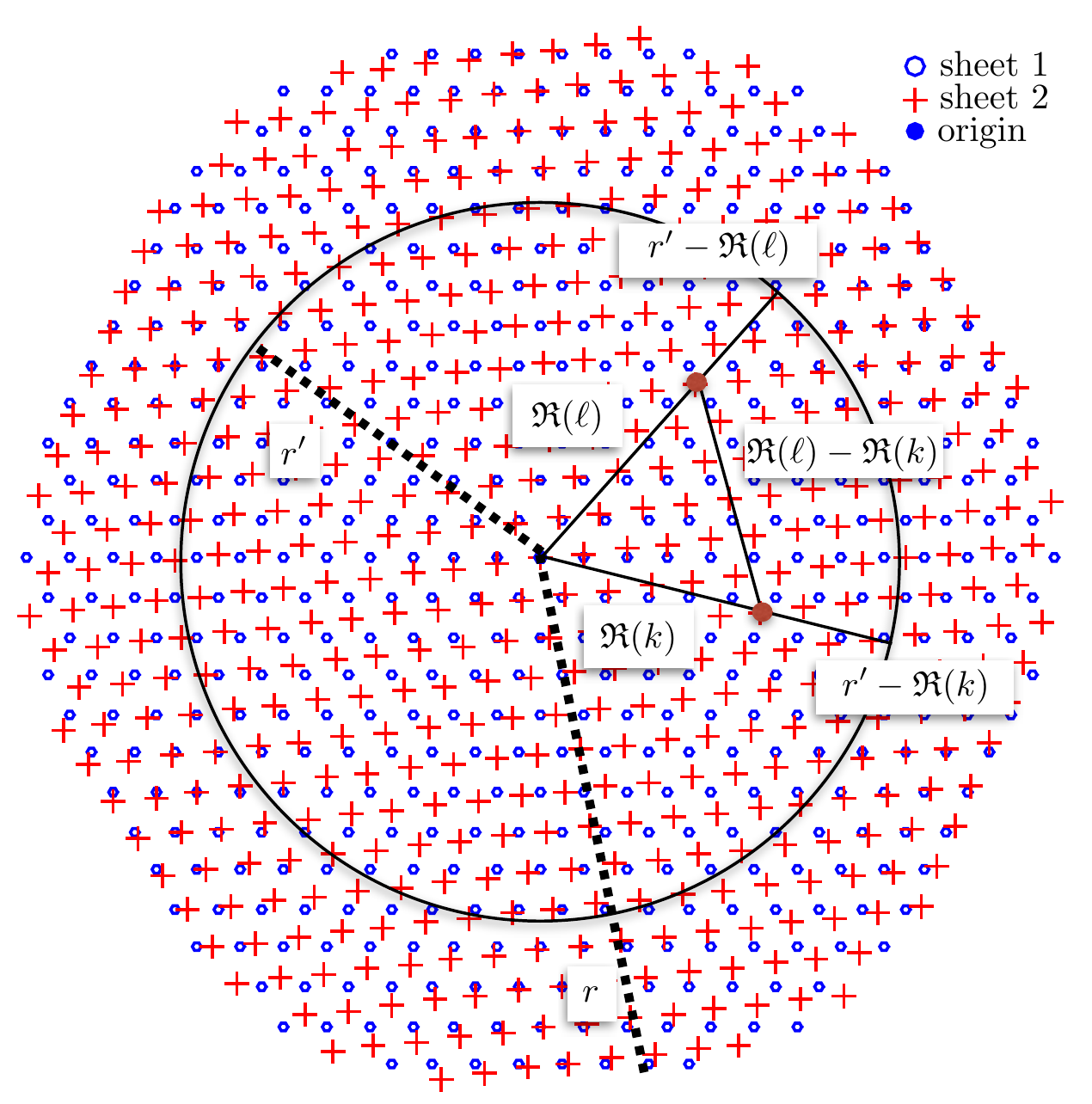}
\caption{For given sites $\ell$ and $k$, we plot the relevant distances in solid lines and system radii in dotted lines for considering resolvent error in Lemma \ref{lemma:resolvent2}. }
\label{fig:resolvent2}
\end{figure}

Lemma \ref{lemma:resolvent2} shows that the resolvent difference is bounded by the site distances from the edge of the first cut-off region (the circle with radius $r'$) and the distance between the two sites. This is consistent with  Lemma \ref{lemma:resolvent}.

Let $\mathcal{C}$ be a contour around $\Msup$ such that $\tilde{d}/2 < d(\mathcal{C},\Msup) < \tilde{d}$. By Lemma \ref{lemma:resolvent2}, we have for $g \in \Lambda_{\tilde{d}}$ that
\begin{equation*}
\begin{split}
&\hspace{-1cm} |\D_{\alpha}[H_{r,j}(b)](g) - \D_{\alpha}[H_{r',j}(b)](g)| \\
&=\biggl| \frac{1}{2\pi i}\oint_\C g(z)\biggr([(z-H_{r,j}(b))^{-1}]_{0\alpha,0\alpha} - [(z- H_{r',j}(b))^{-1}]_{0\alpha,0\alpha}\biggl)\biggr|\\
& \leq C' \tilde{d}^{-2}r' \sup_{z \in \C}|g(z)| e^{-\gamma\tilde{d} r'}.
\end{split}
\end{equation*}
Hence $\{\D_\alpha[H_{r_n,j}(b)]\}_n$ is a Cauchy sequence for $r_n \rightarrow
\infty$, which therefore has some limit $\D_\alpha[\M](b,g)$. $\D_\alpha[\M]$ is
linear in $g$, since each element of the Cauchy sequence is linear. Further, we
have the error bound
\begin{equation*}
|\D_\alpha[\M](b,g) - \D_\alpha[H_{r,j}(b)](g)| \leq C' \tilde{d}^{-2}r\sup_{z \in \C}|g(z)| e^{-\gamma\tilde{d} r}.
\end{equation*}

Since the linear functional $\D_\alpha[H_{r,j}(b)]$ is bounded  by $\|\D_\alpha[H_{r,j}(b)]\| \leq 1$ we also obtain that $\D_\alpha[\M](b,\cdot)$ is a bounded linear functional, and so has a unique extension to a bounded linear functional on the space $C(\Msup)$.

This completes the proof of Theorem \ref{thm:conv}.

\subsection{Proof of Theorem \ref{thm:smooth}}

\begin{lemma}
\label{lemma:smooth}
Suppose $h_{\alpha\alpha'} \in C^n(\mathbb{R}^2)$ for $n \in \mathbb{N}\cup\{\infty\}$ and $\partial_{b_1}^m\partial_{b_2}^{m'}h_{\alpha\alpha'}$ is uniformly continuous for $m+m' \leq n$. We further assume the decay estimate
\begin{equation} \label{eq:lemma_smooth:exp_decay_Dh}
|\partial_{b_m}\partial_{b_{m'}} h_{\alpha\alpha'}(r)| \leq Ce^{-\gamma'r}.
\end{equation}
Then for $k = 0\alpha$, we have $b \mapsto [(z-H_{r,j}(b))^{-1}]_{kk} \in C^n_{\rm per}(\Gamma_{P_j})$, and we have the limit
\begin{equation*}
   b \mapsto \lim_{r\rightarrow\infty} [(z-H_{r,j}(b))^{-1}]_{kk} \in C_{\rm per}^n(\Gamma_{P_j})
\end{equation*}
for $d(\{z\},\Msup) > 0$. Furthermore, for all $b \in \mathbb{R}^2$, $z
\mapsto [(z-H_{r,j}(b))^{-1}]_{kk}$ is analytic in $\mathbb{C}\setminus
\Msup$.
\end{lemma}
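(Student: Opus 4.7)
The plan is to handle the three assertions (analyticity in $z$, $C^n$ regularity in $b$ at finite $r$, and persistence of this regularity in the limit $r\to\infty$) separately, leveraging the same resolvent machinery already established in Lemmas~\ref{lemma:resolvent} and~\ref{lemma:resolvent2}. Periodicity in $b$ is a relabeling argument: shifting $b$ by any $v\in\R_{P_j}$ is equivalent to translating the indices of sheet $P_j$ by $-v$, which is a unitary conjugation of $H_{r,j}(b)$ that leaves the diagonal entry at $k=0\alpha$ (for $\alpha\in\A_j$, i.e.\ on the fixed sheet) invariant. Hence $b\mapsto[(z-H_{r,j}(b))^{-1}]_{kk}$ descends to $\Gamma_{P_j}$, and the same holds in the limit once convergence of derivatives is known.

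Analyticity in $z$ for fixed $b$ and finite $r$ is immediate: $H_{r,j}(b)$ is a finite Hermitian matrix with $\sigma(H_{r,j}(b))\subset \Msup$, so $(z-H_{r,j}(b))^{-1}$ is a matrix-valued analytic function on $\mathbb{C}\setminus \Msup$, whence so is its $(k,k)$ entry. For $C^n$ regularity in $b$ at fixed $r$, the assumption that $h_{\alpha\alpha'}\in C^n$ with uniformly continuous derivatives makes $b\mapsto H_{r,j}(b)$ a $C^n$ map into $M_{|\Omega_r|}(\mathbb{C})$; since matrix inversion is real-analytic on the open set of invertible matrices, the composition $(z-H_{r,j}(b))^{-1}$ is $C^n$ in $b$ for each $z\notin\sigma(H_{r,j}(b))$, and taking the $(k,k)$ entry is a bounded linear operation.

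The substantive step is to pass to the limit while preserving $C^n$ regularity. Here I would argue that, for $m=(m_1,m_2)$ with $|m|\le n$, the mixed partials $\partial_m [(z-H_{r,j}(b))^{-1}]_{kk}$ form a Cauchy sequence in $r$, uniformly in $b\in\Gamma_{P_j}$. Differentiating the resolvent identity gives a sum over compositions of products of the form
\begin{equation*}
(z-H_{r,j}(b))^{-1} \bigl(\partial_{m_1} H_{r,j}(b)\bigr) (z-H_{r,j}(b))^{-1} \cdots \bigl(\partial_{m_q} H_{r,j}(b)\bigr) (z-H_{r,j}(b))^{-1}
\end{equation*}
evaluated at the $(k,k)$ entry, where $\sum_i |m_i|\le |m|\le n$. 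Crucially, the matrices $\partial_{m_i} H_{r,j}(b)$ again have entries of the form $\partial_{m_i}h_{\alpha\alpha'}(\cdots)$, which by assumption \eqref{eq:lemma_smooth:exp_decay_Dh} satisfy the same exponential-decay hypotheses used in Lemmas~\ref{lemma:resolvent} and~\ref{lemma:resolvent2}. Therefore I can apply exactly the same contour integration plus site-by-site resolvent telescoping argument as in the proof of Theorem~\ref{thm:conv}: after inserting the chain of resolvents and hopping differences, the sum over intermediate indices converges absolutely thanks to the composition of exponential decay factors, and the contribution of sites outside $\Omega_{r'}$ is bounded by $C\tilde d^{-(2q+2)} r' e^{-\gamma\tilde d r'}$. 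Summing over the finitely many partition terms up to order $n$ yields a uniform Cauchy bound for $\partial_m[(z-H_{r,j}(b))^{-1}]_{kk}$ in $r$, so the limit lies in $C^n_{\rm per}(\Gamma_{P_j})$.

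The main obstacle I anticipate is purely bookkeeping: managing the combinatorial explosion of terms produced by differentiating a product of $q+1$ resolvents and ensuring each term's decay estimate survives composition so the double sum over intermediate sites remains summable with an $r'$-dependent tail. The correct way to organize this is to introduce a general multilinear estimate on $n$-fold products of resolvents with smoothly decaying ``perturbations'' sandwiched between them, prove it once by induction on $q$, and then apply it uniformly. Once that estimate is in place, the Cauchy-in-$r$ conclusion, the inheritance of periodicity, and analyticity in $z$ for the limit (which follows from uniform convergence on compact subsets of $\mathbb{C}\setminus\Msup$ via Morera/Weierstrass) all drop out.
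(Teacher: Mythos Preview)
Your proposal follows essentially the same route as the paper: differentiate the resolvent, observe that $\partial_m H_{r,j}(b)$ inherits the exponential decay hypothesis, apply Lemma~\ref{lemma:resolvent2} to show the derivatives are uniformly Cauchy in $r$, and conclude. The paper writes out the first-derivative case in full (including an explicit difference-quotient argument, using the uniform continuity of $\partial_{b_1} h$, to verify that the limit of $\partial_{b_1}[(z-H_{r,j}(b))^{-1}]_{kk}$ is indeed the derivative of the limit) and then remarks that higher orders are analogous; your proposed inductive multilinear estimate is simply a tidy way to package that repetition.

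One correction: your periodicity argument (``shifting $b$ by $v\in\R_{P_j}$ is a unitary conjugation of $H_{r,j}(b)$ fixing the $(0\alpha)$ diagonal'') does not work at finite $r$. The truncated index set $\Omega_r$ is not invariant under translating sheet $P_j$ by $v$---the relabeled sites $R'+v$ need not lie in $B_r$---so $H_{r,j}(b+v)$ is not conjugate to $H_{r,j}(b)$ by a permutation of $\Omega_r$. The relabeling argument is valid only for the infinite system, and hence yields periodicity of the \emph{limit} $b\mapsto\lim_{r\to\infty}[(z-H_{r,j}(b))^{-1}]_{kk}$, which is what is actually used downstream. The paper's proof likewise asserts periodicity only for the limit $L$.
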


\begin{proof}
   We will only consider the derivative $\partial_{b_1}$; the
   treatment of higher (and lower) order derivatives follow the same
   line of argument, but are more cumbersome.
Let $k = 0\alpha$ for some $\alpha \in \mathcal{A}_j$, then
\begin{equation*}
\begin{split}
\partial_{b_1} [(z-H_{r,j}(b))^{-1}]_{kk} &= \partial_{b_1}  [(z-H_{r,j}(b))^{-1}]_{kk}\\
&=\sum_{s,\ell \in \Omega_r}[(z-H_{r,j}(b))^{-1}]_{ks}[\partial_{b_1} H_{r,j}(b)]_{s\ell}[(z-H_{r,j}(b))^{-1}]_{\ell k}.
\end{split}
\end{equation*}
Lemma \ref{lemma:resolvent2} implies that, for  $r > r' > 0$,
\begin{equation*}
\begin{split}
R(r,r',k,s) &: =  \biggr|[(z-H_{r,j}(b))^{-1}]_{ks}-[(z-H_{r',j}(b))^{-1}]_{ks}\biggl|\\
& \leq C\min\{e^{-\gamma |\Rinv(k)-\Rinv(s)|},r'e^{-\gamma \min\{ r'-|\Rinv(k)|,r'-|\Rinv(s)|\}}\},
\end{split}
\end{equation*}
where $C$ and $\gamma$ are independent of $r$.
Note also that, for $s,\ell \in \Omega_{r'}$, we have
\begin{equation*}
\partial_{b_1}  [H_{r,j}(b)]_{s\ell} = \partial_{b_1}  [H_{r',j}(b)]_{s\ell}.
\end{equation*}
Recalling that $\Rinv(k) = \Rinv(0\alpha) = 0$,
and employing \eqref{eq:lemma_smooth:exp_decay_Dh}, we estimate
 \begin{equation*}
 \begin{split}
 &\bigl| \partial_{b_1}  [(z-H_{r,j}(b))^{-1}]_{kk} - \partial_{b_1}  [(z-H_{r',j}(b))^{-1}]_{kk} \bigr| \\
 &\leq C  \biggl(\sum_{s,\ell \in \Omega_{r'}} \bigl( R(r,r',k,s)e^{-\gamma|\Rinv(\ell)-\Rinv(k)|} + R(r,r',\ell,k)e^{-\gamma|\Rinv(s)-\Rinv(k)|}\bigr) |\partial_{b_1}  [H_{r,j}(b)]_{s\ell}|\\
& \qquad \qquad + \sum_{s \in \Omega_r, \ell \in \Omega_r\setminus\Omega_{r'}} \biggl|[(z-H_{r,j}(b))^{-1}]_{ks}[\partial_{b_1}  H_{r,j}(b)]_{s\ell}[(z-H_{r,j}(b))^{-1}]_{\ell k}\biggr|\biggr) \\
& \leq C'\biggl(\sum_{s,\ell \in \Omega_{r'}} \bigl( R(r,r',k,s)e^{-\gamma|\Rinv(\ell)|} + R(r,r',\ell,k)e^{-\gamma|\Rinv(s)|}\bigr) e^{-\gamma'|\Rinv(s)-\Rinv(\ell)| } + r'e^{-\gamma r'}\biggr) \\
& \leq  C'' r'\sum_{s,\ell \in\Omega_{r'}} e^{-\gamma(r'-|\Rinv(s)|) - \gamma |\Rinv(\ell)| - \gamma' |\Rinv(s)-\Rinv(\ell)|} + C'r'e^{-\gamma r'} \\
& \leq C''' r'^2\sum_{s \in \Omega_{r'}} e^{-\gamma(r'-|\Rinv(s)|) - \min\{\gamma,\gamma'\} |\Rinv(s)|} + C'r'e^{-\gamma r'} \\
& \leq \tilde{C} e^{-\gamma'' r'},
 \end{split}
 \end{equation*}
for any choice of $\gamma'' < \min\{\gamma,\gamma'\}$, where $\tilde{C}$ depends on the choice of $\gamma''$.

Therefore, as $r_n \rightarrow \infty$, $[(z-H_{r_n,j}(b))^{-1}]_{kk}$ forms a Cauchy sequence, and in particular has a limit
\begin{equation*}
   L_1(b) := \lim_{r \uparrow \infty}
            \partial_{b_1} [(z-H_{r,j}(b))^{-1}]_{kk}.
\end{equation*}

Next, we define
\begin{equation*}
   L(b) := \lim_{r\uparrow\infty}[(z-H_{r,j}(b))^{-1}]_{kk}.
\end{equation*}
We need to show that $\partial_{b_1}  L$ exists and satisfies
\begin{equation*}
\partial_{b_1}  L = L_1.
\end{equation*}
We denote
\begin{equation*}
\Resolvent(b) = [(z-H_{r,j}(b))^{-1}]_{kk}.
\end{equation*}
Since $\partial_{b_1} h_{\alpha\alpha'}$ is uniformly continuous there
exists a modulus of continuity $\omega$ such that $|\partial_{b_1}h(b) - \partial_{b_1}h(b')| \leq \omega(|b-b'|)$. We then observe that, for $\epsilon > 0$ and $e_1 = (1, 0)$,
\begin{equation*}
\begin{split}
\frac{1}{\epsilon}\biggr(\Resolvent(b+\epsilon e_1)  - \Resolvent(b) \biggl) &=  [(z-H_{r,j}(b))^{-1} \bigr( \partial_{b_1} H_{r,j}(b) + O(\omega(\epsilon))\bigl) (z-H_{r,j}(b))^{-1}]_{kk} \\
&= \partial_{b_1} \Resolvent(b) + O(\omega(\epsilon)).
\end{split}
\end{equation*}
Here $O(\omega(\epsilon))$ is independent of $r$. Letting $r \rightarrow \infty$, we have
\begin{equation*}
\frac{1}{\epsilon}\biggr([L(b+\epsilon e_1)  - L(b) \biggl) = L_1(b) + O(\omega(\epsilon)).
\end{equation*}
Letting $\epsilon \rightarrow 0$ shows that $L \in C_{\text{per}}^{(1,0)}(\Gamma_j)$ and
$\partial_{b_1}  L = L_1$,
which is the desired result.

Continuity with respect to $b$ follows the same argument. Analyticity
with respect to $z$ follows from Section 5.2 of \cite{kato1995perturbation}.
\end{proof}

Theorem \ref{thm:smooth} follows immediately from Lemma \ref{lemma:smooth}.

\subsection{Proof of Theorem \ref{thm:thermal} }

Without loss of generality, let $j = 1$. Fix $g \in \Lambda, r> 0$ and $\eta < 1$. Then we have
\begin{equation*}
\begin{split}
\D[H_{r,1}(0)](g) &= \frac{1}{|\Omega_r|}\sum_{k \in \Omega_r} \D_k[H_{r,1}(0)](g) \\
& = \frac{1}{|\Omega_r|}\biggl(\sum_{k \in \Omega_r \setminus \Omega_{\eta r}} \D_k[H_{r,1}(0)](g) + \sum_{k \in\Omega_{\eta r}} \D_k[H_{r,1}(0)](g)\biggr).
\end{split}
\end{equation*}

We define $\Ainv : \Omega \rightarrow \mathcal{A}_1 \cup \mathcal{A}_2$ such that $\Ainv(R\alpha) = \alpha$. By Lemma \ref{lemma:resolvent2}, we have for $k = R\alpha \in \Omega_{\eta r}$ and $\alpha \in \mathcal{A}_{j}$ that
\begin{equation*}
|\D_k[H_{r,1}(0)](g) -\D_\alpha[\M](\mod_{P_j}\circ\Rinv(k),g)| \leq C \sup_{z \in \C} |g(z)| e^{-\gamma r(1-\eta)}.
\end{equation*}
The site $k$ is at least a distance $r(1-\eta)$ from the boundary of $\Omega_r$.

Consider the distribution
\begin{equation*}
\D[\M](g) = \nu \sum_{j=1}^2\sum_{\alpha \in \mathcal{A}_j} \int_{\Gamma_{P_j}} D_\alpha[\M](b,g)db.
\end{equation*}
Since the integrand is continuous with respect to $b$ (see Theorem \ref{thm:smooth})
the integration is well-defined. We now estimate
\begin{equation*}
\begin{split}
| \D[\M](g) - \D[H_{r,1}(0)](g)| &\leq \biggl|\frac{1}{|\Omega_r|}\sum_{ k \in \Omega_r\setminus\Omega_{\eta r}} \D_k[H_{r,1}(0)](g)\biggr| \\
& \hspace{-1cm}  +\biggl| \D[\M](g) - \frac{1}{|\Omega_{\eta r}|} \sum_{j=1}^2\sum_{R\alpha \in \Omega_{\eta r}: \alpha \in \mathcal{A}_j} \D_\alpha[\M](\mod_{P_j}(R),g)  \biggr| \\
& \hspace{-1cm} + \biggl| \frac{1}{|\Omega_{\eta r}|}\sum_{j=1}^2\sum_{R\alpha \in \Omega_{\eta r}: \alpha \in \mathcal{A}_j} \D_\alpha[\M](\mod_{P_j}(R),g)\\
&  \hspace{-1cm} \hspace{5cm} - \frac{1}{|\Omega_{\eta r}|}\sum_{k \in \Omega_{\eta r}} \D_k[H_{r,1}(0)](g) \biggr| \\
&\hspace{-1cm} +
   \bigg(1 - \frac{|\Omega_{\eta r}|}{|\Omega_r|}\bigg)
   \frac{1}{|\Omega_{\eta r}|} \bigg|\sum_{k \in \Omega_{\eta r}} \D_k[H_{r,1}(0)](g) \bigg|.
\end{split}
\end{equation*}
The first and fourth terms are easily seen to be bounded by $O(1-\eta^2)$. By Theorem \ref{thm:equidistribution}, the second term converges to $0$ as $ r \rightarrow \infty$.
Finally, the third term can be estimated by
\begin{equation*}
\begin{split}
   \biggl| \frac{1}{|\Omega_{\eta r}|}
            \sum_{j=1}^2
            \sum_{R\alpha \in \Omega_{\eta r} : \alpha \in \mathcal{A}_j}
            \D_\alpha[\M](\mod_{P_j}(R),g)& \\
- \frac{1}{|\Omega_{\eta r}|}\sum_{R\alpha\in\Omega_{\eta r}} \D_{R\alpha}[H_{r,1}(0)](g) \biggr| &\leq C \sup_{z \in \C} |g(z)| e^{-\gamma r(1-\eta)}.
 \end{split}
\end{equation*}
Therefore if we choose a pair of sequences $(\eta_j),(r_j)$ such that $\eta_j \uparrow 1$, $r_j\uparrow \infty$, and $r_j(1-\eta_j) \rightarrow \infty$, we conclude that
\begin{equation*}
   \D[H_{r,1}(0)](g) \rightarrow \D[\M](g).
\end{equation*}
Since $\D[\M]$ is a bounded linear functional, it can be extended as before to be a bounded linear functional over $C(\Msup)$.

\subsection{Proof of Theorem \ref{thm:smooth_rate}}
We denote $\tilde z = (\tilde z_1, \tilde z_2) \in \mathbb{C}^2$. Let $z \in \mathbb{C}$. Then if $c>0$ is sufficiently small and $\Imaginary(\tilde z_1),\Imaginary(\tilde z_2) \in (-c,c)$, we have
\begin{equation*}
\|z-H_{r,j}(\tilde z)\|_2 > 0,
\end{equation*}
and hence $\oint_\C g(z)[(z-H_{r,j}(\tilde z))^{-1}]_{0\alpha,0\alpha}$ is analytic at $\tilde z$ satisfying $\Imaginary(\tilde z_1),\Imaginary(\tilde z_2) \in (-c,c)$. We pick a contour $\C$ enclosing $\Msup$ such that $\tilde{d}/2 < d(\mathcal{C},\Msup) < \tilde{d}$ and then chose $c>0$ small enough, but keeping $c \sim \tilde d$. Since $\int_\C g(z) [(z-H_{r,j}(\tilde z))^{-1}]_{0\alpha,0\alpha}dz$ is analytic with respect to $\tilde z$, we can apply Theorem 2 of \cite{numeric_integration} to deduce
\begin{equation*}
\bigl|\oint_\C g(z)[(z-H_{r,j}(\tilde z))^{-1}]_{0\alpha,0\alpha}\bigr|<C\tilde{d}^{-1}\sup_{z:~d(z,\Msup)<\tilde d}|g(z)| e^{-\gamma''\tilde{d}\DiscSize}
\end{equation*}
for some $C>0$ independent of $r$. The result follows.

\section{Conclusion} \label{sec:conclusion}
The main result of this work, Theorem~\ref{thm:thermal},
is a representation formula for the thermodynamic limit of the
electronic structure of incommensurate layered heterostructures.
The result is reminiscent of Bellisard's noncommutative Brillouin Zone for
aperiodic solids \cite{Bellissard2002}, replacing on-site randomness with a
number-theoretic equidistribution theorem.

Crucially, our representation formula lends itself to numerical approximation.
In \S~\ref{sec:numerics} we formulate, and analyze at a heuristic level, an
efficient kernel polynomial method to approximately compute the density of
states in twisted bilayer graphene. This preliminary exploration  provides not
only quantitative confirmation of our analytical results, but also demonstrates
the utility of our approach for applications to real material models.

\section*{Acknowledgement}
The authors would like to thank Stephen Carr and Paul Cazeaux for helpful comments on
the theme of this paper.

\end{document}